\documentclass[12pt]{article}
\usepackage{amssymb}
\usepackage{multirow}
\usepackage{epsfig,rotating}
\usepackage{amssymb,amsmath}
\usepackage{latexsym}
\usepackage{tabularx}
\usepackage{array}
\usepackage{natbib}
\usepackage[retainorgcmds]{IEEEtrantools}
\usepackage{url}            
\usepackage{booktabs}       
\usepackage{amsfonts}       
\usepackage{nicefrac}       
\usepackage{microtype}      
\usepackage{lipsum}
\usepackage{amsmath}
\usepackage{amssymb}
\usepackage{indentfirst}
\usepackage{mathtools}
\usepackage{algorithm}
\usepackage{algpseudocode}
\usepackage{appendix}
\usepackage{hyperref} 
\hypersetup{
	colorlinks   = true, 
	urlcolor     = blue, 
	linkcolor    = blue, 
	citecolor   = blue 
}
\usepackage{nccmath}
\usepackage{centernot}
\usepackage{setspace}

\renewcommand{\baselinestretch}{1.5}

\usepackage{tkz-graph}  
\usetikzlibrary{shapes.geometric}
\usepackage{siunitx}
\usetikzlibrary{positioning}
\tikzset{mynode/.style={draw,text width=1in,align=center}
}

\usepackage{bm,fullpage,enumerate,setspace}
	
\newcommand{\indep}{\perp \!\!\! \perp}
\newcommand{\nindep}{\not\!\perp\!\!\!\perp}

\begin{document}

\def\reduce#1{{\small  #1}}
\def\c#1{\ensuremath{\mathcal{#1}}}
\def\foot#1{\footnote{#1}}

\newcommand{\nmathbf}{\bm}

\def\bfA{\nmathbf A}
\def\bfB{\nmathbf B}
\def\bfC{\nmathbf C}
\def\bfD{\nmathbf D}
\def\bfE{\nmathbf E}
\def\bfF{\nmathbf F}
\def\bfG{\nmathbf G}
\def\bfH{\nmathbf H}
\def\bfI{\nmathbf I}
\def\bfJ{\nmathbf J}
\def\bfK{\nmathbf K}
\def\bfL{\nmathbf L}
\def\bfM{\nmathbf M}
\def\bfN{\nmathbf N}
\def\bfO{\nmathbf O}
\def\bfP{\nmathbf P}
\def\bfQ{\nmathbf Q}
\def\bfR{\nmathbf R}
\def\bfS{\nmathbf S}
\def\bfT{\nmathbf T}
\def\bfU{\nmathbf U}
\def\bfV{\nmathbf V}
\def\bfW{\nmathbf W}
\def\bfX{\nmathbf X}
\def\bfY{\nmathbf Y}
\def\bfZ{\nmathbf Z}

\def\bfa{\nmathbf a}
\def\bfb{\nmathbf b}
\def\bfc{\nmathbf c}
\def\bfd{\nmathbf d}
\def\bfe{\nmathbf e}
\def\bff{\nmathbf f}
\def\bfg{\nmathbf g}
\def\bfh{\nmathbf h}
\def\bfi{\nmathbf i}
\def\bfj{\nmathbf j}
\def\bfk{\nmathbf k}
\def\bfl{\nmathbf l}
\def\bfm{\nmathbf m}
\def\bfn{\nmathbf n}
\def\bfo{\nmathbf o}
\def\bfp{\nmathbf p}
\def\bfq{\nmathbf q}
\def\bfr{\nmathbf r}
\def\bfs{\nmathbf s}
\def\bft{\nmathbf t}
\def\bfu{\nmathbf u}
\def\bfv{\nmathbf v}
\def\bfw{\nmathbf w}
\def\bfx{\nmathbf x}
\def\bfy{\nmathbf y}
\def\bfz{\nmathbf z}

\def\bfalpha  {\nmathbf \alpha}
\def\bfbeta   {\nmathbf \beta}
\def\bfgamma  {\nmathbf \gamma}
\def\bfdelta  {\nmathbf \delta}
\def\bfepsilon{\nmathbf \epsilon}
\def\bfzeta   {\nmathbf \zeta}
\def\bfeta    {\nmathbf \eta}
\def\bftheta  {\nmathbf \theta}
\def\bfiota   {\nmathbf \iota}
\def\bfkappa  {\nmathbf \kappa}
\def\bflambda {\nmathbf \lambda}
\def\bfmu     {\nmathbf \mu}
\def\bfnu     {\nmathbf \nu}
\def\bfxi     {\nmathbf \xi}
\def\bfomicron{\nmathbf \omicron}
\def\bfpi     {\nmathbf \pi}
\def\bfrho    {\nmathbf \rho}
\def\bfsigma  {\nmathbf \sigma}
\def\bftau    {\nmathbf \tau}
\def\bfupsilon{\nmathbf \upsilon}
\def\bfphi    {\nmathbf \phi}
\def\bfpsi    {\nmathbf \psi}
\def\bfchi    {\nmathbf \chi}
\def\bfomega  {\nmathbf \omega}

\def\bfAlpha  {\nmathbf \Alpha}
\def\bfBeta   {\nmathbf \Beta}
\def\bfGamma  {\nmathbf \Gamma}
\def\bfDelta  {\nmathbf \Delta}
\def\bfEpsilon{\nmathbf \Epsilon}
\def\bfZeta   {\nmathbf \Zeta}
\def\bfEta    {\nmathbf \Eta}
\def\bfTheta  {\nmathbf \Theta}
\def\bfIota   {\nmathbf \Iota}
\def\bfKappa  {\nmathbf \Kappa}
\def\bfLambda {\nmathbf \Lambda}
\def\bfMu     {\nmathbf \Mu}
\def\bfNu     {\nmathbf \Nu}
\def\bfXi     {\nmathbf \Xi}
\def\bfOmicron{\nmathbf \Omicron}
\def\bfPi     {\nmathbf \Pi}
\def\bfRho    {\nmathbf \Rho}
\def\bfSigma  {\nmathbf \Sigma}
\def\bfTau    {\nmathbf \Tau}
\def\bfUpsilon{\nmathbf \Upsilon}
\def\bfPhi    {\nmathbf \Phi}
\def\bfPsi    {\nmathbf \Psi}
\def\bfChi    {\nmathbf \Chi}
\def\bfOmega  {\nmathbf \Omega}

\newcommand{\ttheta}{\tilde{\theta}}
\newcommand{\bfzero}{{\nmathbf 0}}
\newcommand{\bfone}{{\nmathbf 1}}
\newcommand{\vareps}{\varepsilon}
\def\bfvareps{\nmathbf \varepsilon}
\newcommand{\tgamma}{\tilde\gamma}

\newcommand{\cfA}{\mbox{\c{A}}}
\newcommand{\cfB}{\mbox{\c{B}}}
\newcommand{\cfC}{\mbox{\c{C}}}
\newcommand{\cfD}{\mbox{\c{D}}}
\newcommand{\cfE}{\mbox{\c{E}}}
\newcommand{\cfF}{\mbox{\c{F}}}
\newcommand{\cfG}{\mbox{\c{G}}}
\newcommand{\cfH}{\mbox{\c{H}}}
\newcommand{\cfI}{\mbox{\c{I}}}
\newcommand{\cfJ}{\mbox{\c{J}}}
\newcommand{\cfK}{\mbox{\c{K}}}
\newcommand{\cfL}{\mbox{\c{L}}}
\newcommand{\cfM}{\mbox{\c{M}}}
\newcommand{\cfN}{\mbox{\c{N}}}
\newcommand{\cfO}{\mbox{\c{O}}}
\newcommand{\cfP}{\mbox{\c{P}}}
\newcommand{\cfQ}{\mbox{\c{Q}}}
\newcommand{\cfR}{\mbox{\c{R}}}
\newcommand{\cfS}{\mbox{\c{S}}}
\newcommand{\cfT}{\mbox{\c{T}}}
\newcommand{\cfU}{\mbox{\c{U}}}
\newcommand{\cfV}{\mbox{\c{V}}}
\newcommand{\cfX}{\mbox{\c{X}}}
\newcommand{\cfY}{\mbox{\c{Y}}}
\newcommand{\cfZ}{\mbox{\c{Z}}}

\def\boldfacefake#1{\kern-4pt
   \hbox{ \mathsurround=0pt
   \hbox to 0.4pt{$#1$\hss}\hbox to 0.4pt{$#1$\hss}\hbox {$#1$}}}

\def\bfitI{\mbox{\boldfacefake{\it I}}}

\newcommand{\bcfA}{\boldsymbol{\mathcal{A}}}
\newcommand{\bcfB}{\boldsymbol{\mathcal{B}}}
\newcommand{\bcfC}{\boldsymbol{\mathcal{C}}}
\newcommand{\bcfD}{\boldsymbol{\mathcal{D}}}
\newcommand{\bcfE}{\boldsymbol{\mathcal{E}}}
\newcommand{\bcfF}{\boldsymbol{\mathcal{F}}}
\newcommand{\bcfG}{\boldsymbol{\mathcal{G}}}
\newcommand{\bcfH}{\boldsymbol{\mathcal{H}}}
\newcommand{\bcfI}{\boldsymbol{\mathcal{I}}}
\newcommand{\bcfJ}{\boldsymbol{\mathcal{J}}}
\newcommand{\bcfK}{\boldsymbol{\mathcal{K}}}
\newcommand{\bcfL}{\boldsymbol{\mathcal{L}}}
\newcommand{\bcfM}{\boldsymbol{\mathcal{M}}}
\newcommand{\bcfN}{\boldsymbol{\mathcal{N}}}
\newcommand{\bcfO}{\boldsymbol{\mathcal{O}}}
\newcommand{\bcfP}{\boldsymbol{\mathcal{P}}}
\newcommand{\bcfQ}{\boldsymbol{\mathcal{Q}}}
\newcommand{\bcfR}{\boldsymbol{\mathcal{R}}}
\newcommand{\bcfS}{\boldsymbol{\mathcal{S}}}
\newcommand{\bcfT}{\boldsymbol{\mathcal{T}}}
\newcommand{\bcfU}{\boldsymbol{\mathcal{U}}}
\newcommand{\bcfV}{\boldsymbol{\mathcal{V}}}
\newcommand{\bcfW}{\boldsymbol{\mathcal{W}}}
\newcommand{\bcfX}{\boldsymbol{\mathcal{X}}}
\newcommand{\bcfY}{\boldsymbol{\mathcal{Y}}}
\newcommand{\bcfZ}{\boldsymbol{\mathcal{Z}}}


\newcommand{\p}{\mbox{P}}
\newcommand{\D}{\mbox{D}}
\newcommand{\E}{\mbox{E}}
\newcommand{\Mo}{\mbox{Mo}}
\newcommand{\Me}{\mbox{Me}}
\newcommand{\Cov}{\mbox{Cov}}
\newcommand{\Var}{\mbox{Var}}
\newcommand{\Corr}{\mbox{Corr}}
\newcommand{\Q}{\mbox{Q}}
\newcommand{\vech}{\mbox{vech}}
\newcommand{\tr}{\mbox{tr}}
\newcommand{\supp}{\mbox{supp}}
\newcommand{\dto}{\stackrel{d}{\to}}

\newcommand{\Bb}{\mbox{Bb}}
\newcommand{\Be}{\mbox{Be}}
\newcommand{\Bi}{\mbox{Bi}}
\newcommand{\Br}{\mbox{Br}}
\newcommand{\Ca}{\mbox{Ca}}
\newcommand{\Di}{\mbox{Di}}
\newcommand{\Ex}{\mbox{Ex}}
\newcommand{\Fs}{\mbox{Fs}}
\newcommand{\Ga}{\mbox{Ga}}
\newcommand{\Ge}{\mbox{Ge}}
\newcommand{\Gg}{\mbox{Gg}}
\newcommand{\Hy}{\mbox{Hy}}
\newcommand{\Ig}{\mbox{Ig}}
\newcommand{\Ip}{\mbox{Ip}}
\newcommand{\Lo}{\mbox{Lo}}
\newcommand{\Mu}{\mbox{Mu}}
\newcommand{\Nb}{\mbox{Nb}}
\newcommand{\Ng}{\mbox{Ng}}
\newcommand{\Nw}{\mbox{Nw}}
\newcommand{\Po}{\mbox{Po}}
\newcommand{\Pg}{\mbox{Pg}}
\newcommand{\Pn}{\mbox{Pn}}
\newcommand{\Ra}{\mbox{Ra}}
\newcommand{\St}{\mbox{St}}
\newcommand{\Un}{\mbox{Un}}
\newcommand{\Wi}{\mbox{Wi}}

\newcommand{\dd}[1]{\,d#1}
\newcommand{\barx}{\mbox{$\overline x$}}
\newcommand{\comb}[2]{{#1\choose#2}}
\newcommand{\ontop}[2]{{#1\atop#2}}
\newcommand{\h}{\hbox{$1\over2$}}
\newcommand{\ok}{\hfill\fbox{}}

\newcommand{\brow}[2]{\mbox{$\{{#1}_1,\ldots,{#1}_{#2}\}$}}
\newcommand{\prow}[2]{\mbox{$({#1}_1,\ldots,{#1}_{#2})$}}
\newcommand{\row}[2]{\mbox{${#1}_1,\ldots,{#1}_{#2}$}}
\newcommand{\data}{\row{x}{n}}
\newcommand{\bdata}{\prow{x}{n}}
\newcommand{\ie}{\emph{i.e.},\ }
\newcommand{\co}{\emph{cf.}\ }
\newcommand{\eg}{\emph{e.g.}, }
\newcommand{\etalc}{\emph{et al.},\ }
\newcommand{\etal}{\emph{et al.}\ }

\newenvironment{mat}{\left(\begin{array}}{\end{array}\right)}

\newcommand{\twomat}[5]{\begin{array}{ll}\displaystyle
              #1&\mbox{#2}\\[#5pt]#3&\mbox{#4}\end{array}}

\newcommand{\twocases}[6]
             {#1=\left\{\twomat{#2}{#3}{#4}{#5}{#6}\right.}

\newcommand{\mymatrix}[4]
           {\left(\begin{array}{ll}{#1} & {#2}\\
                                   {#3} & {#4}
                    \end{array} \right)}

\newcommand{\btable}{\begin{table}[h]\centering}
\newcommand{\etable}{\end{table}}
\newcommand{\bt}{\begin{parag}\small \let\b=\nsb \let\sb=\nssb \begin{tabular}}
\newcommand{\et}{\end{tabular}\let\b=\nb \let\sb=\nsb\end{parag}}
\newcommand{\capt}[1]
         {\begin{quotation}\caption{{\small #1 }}\vs{-2}\end{quotation}}

\newenvironment{parag}{\par}{\par}
\newenvironment{dif}
    {\begin{parag}\small \let\b=\nsb \let\sb=\nssb \begin{parag}}
    {\let\b=\nb \let\sb=\nsb \end{parag}\end{parag}}


\newcommand{\be}{\begin{eqnarray}}
\newcommand{\ee}{\end{eqnarray}}
\newcommand{\ba}{\begin{eqnarray*}}
\newcommand{\ea}{\end{eqnarray*}}

\newcommand{\go}{\rightarrow}
\newcommand{\goi}{\rightarrow \infty}
\newcommand{\ol}{\overline}
\newcommand{\fr}{\frac}
\newcommand{\pn}{\par\noindent}
\newcommand{\nc}{\nonumber\\}
\newcommand{\ssum}{\mbox{$\sum$}}
\newcommand{\hhline}{\hline\hline}

\newtheorem{theorem0}{Theorem}
\newtheorem{lemma0}{Lemma}
\newtheorem{remark0}{Remark}
\newtheorem{fact0}{Fact}
\newtheorem{example0}{Example}
\newtheorem{definition0}{Definition}
\newtheorem{corollary0}{Corollary}
\newtheorem{proposition0}{Proposition}
\newtheorem{algorithmY}{Algorithm}

\newenvironment{theorem}{\begin{theorem0} \mbox{} }{\end{theorem0}}
\newenvironment{lemma}{\begin{lemma0} \mbox{}}{\end{lemma0}}
\newenvironment{remark}{\begin{remark0} \mbox{}}{\end{remark0}}
\newenvironment{fact}{\begin{fact0} \mbox{}}{\end{fact0}}
\newenvironment{example}{\begin{example0} }{\end{example0}}
\newenvironment{definition}{\begin{definition0} \mbox{}}{\end{definition0}}
\newenvironment{corollary}{\begin{corollary0} \mbox{} }{\end{corollary0}}
\newenvironment{proposition}{\begin{proposition0}\mbox{} }{\end{proposition0}}
\newenvironment{algorithm1}{\begin{algorithmY}\mbox{} }{\end{algorithmY}}

\newcommand{\reals}{\mbox{\rm I\kern-.20em R}}
\newcommand{\sreals}{\mbox{\small \rm I\kern-.20em R}}
\newcommand{\mylinel}{\renewcommand{\baselinestretch}{1.8}\tiny\small}
\newcommand{\goto}{\rightarrow}
\newcommand{\expect}{\E}

\newcommand{\bdfn}{\begin{dfn}}
\newcommand{\edfn}{\end{dfn}}
\newcommand{\bteo}{\begin{teo}}
\newcommand{\eteo}{\end{teo}}
\newcommand{\bexa}{\begin{exa}}
\newcommand{\eexa}{\end{exa}}
\newcommand{\bdif}{\begin{dif}}
\newcommand{\edif}{\end{dif}}
\newcommand{\bpro}{\begin{proof}}
\newcommand{\epro}{\end{proof}}


\begin{center}
{\bf\Large Design-Assisted Regression
\\\vspace*{0.1in} \vspace*{0.1in}}
\end{center}

\begin{center}
Shangyuan Ye$^{1,\dag}$, Guanbo Wang$^{2}$, Cong Zhang$^{3}$, Ye Liang$^{4}$
\vspace*{0.15in}

$^{1}$Department of Mathematics and Statistics, Florida International University \\
$^2$The Dartmouth Institute for Health Policy and Clinical Practice, Geisel School of Medicine, Dartmouth College \\
$^3$School of Economics, University of Nottingham Ningbo China \\
$^4$Department of Statistics, Oklahoma State University \\
\vspace*{0.3in}

\end{center}




\begin{quotation}
\small

\begin{spacing}{1}
\noindent {\bf Abstract}
We consider regression problems in which the marginal distribution of the covariates is informative for estimation and variable selection, rather than merely auxiliary. Motivated by random-design, high-dimensional, and latent-effect settings, we propose a general design-assisted regression framework in which the estimating criterion depends on both the conditional model for $Y \mid \bfX$ and structured features of the covariate distribution. The framework identifies two roles of design information: stabilizing weak design directions through quadratic regularization and correcting latent-effect distortion through nuisance augmentation. We establish oracle properties for the resulting estimator, separate the effects of stochastic error, shrinkage, and approximation, and compare it with a benchmark sparse procedure that ignores design information. These results show that the proposed framework improves estimation while preserving first-order prediction performance. Numerical studies and two real-data applications illustrate the practical impact of incorporating design information.

\vspace*{0.5in}

\vspace*{0.15in}

\noindent{\bf Keywords:} 
High-dimensional regression; target distortion; shrinkage estimators; covariate distribution; latent effects.
\end{spacing}

\end{quotation}

\vspace*{0.3in}

$^{\dag}$ Corresponding author. E-mail: \textit{sye@fiu.edu}

\newpage
\setcounter{page}{1}

\section{Introduction}
In classical regression, inference is usually based on the conditional law of $Y \mid \bfX$, and the marginal distribution of the covariates is treated as auxiliary \citep{fraser2004ancillaries}. This viewpoint is natural in low-dimensional parametric models, where the target parameter is defined through the conditional mean or likelihood alone. However, in modern data settings, the same scientific question may be studied using data sets that differ substantially in size, quality, and heterogeneity. In particular, fitting the same regression model to a small but relatively homogeneous data set and to a large, heterogeneous observational data set can lead to markedly different results, even when the underlying structural relationship is assumed to be the same.

One reason is that larger observational data sets often contain richer latent structures, such as cluster effects, hidden confounding, selection effects, and regime heterogeneity, for which the covariate distribution itself becomes informative for estimation rather than merely incidental. This observation is related to the distinction between explanatory and predictive modeling emphasized by \citet{shmueli2010explain}, who argued that predictive success and faithful recovery of the scientific target are conceptually different goals. In a similar spirit, \citet{bzdok2020inference} discussed how predictive and inferential objectives can diverge in modern biomedical data analysis. For regression models with latent heterogeneity, \citet{heagerty2000marginalized} showed that conditional and marginal regression parameters can differ substantially so that the same underlying structural model may induce different fitted regression targets depending on how latent effects are handled.

This perspective also helps to explain why modern machine-learning algorithms can perform well for prediction in large observational data sets, but their ability for structural interpretation may be unreliable. When the covariate distribution carries information about latent effects, a predictive algorithm often exploits this information to improve its risk criteria, even though part of that predictive signal is scientifically irrelevant to the parameter of interest. In other words, some covariates may be useful because they proxy for latent variation rather than because they represent the structural mechanism. This phenomenon is also closely related to the literature on distribution shift, where the design distribution itself affects predictive performance \citep{sugiyama2007covariate}. From this viewpoint, more data are not necessarily translated into more scientific knowledge, as larger and more heterogeneous data sets, which may improve prediction, are more likely to lead to target distortion if the design distribution is ignored.

Incorporating covariate distribution information into regression is closely related to shrinkage estimators and other methods that exploit the geometry of the design. Ridge and generalized ridge regression use this information to stabilize weakly identified directions and improve efficiency \citep{tikhonov1977illposed}, while principal component regression and envelope methods rely on low-dimensional structure in the predictor distribution for a similar purpose \citep{jolliffe1982pcr,cook2010envelope}. The synthetic heterogeneous-effects LASSO uses design features associated with random intercepts for high-dimensional clustered data analysis \citep{ye2026synthetic}.  For elliptically distributed $\bfX$, sufficient dimension reduction methods such as sliced inverse regression identify a basis for the central subspace that preserves the conditional distribution of $Y | \bfX$ \citep{li1991sliced}. Model-X knockoffs use the distribution of $\bfX$ to construct reference variables for valid selective inference rather than point estimation \citep{candes2018panning}. 

In this paper, we propose a general framework, namely the \textit{design-assisted regression}, in which estimation and inference depend not only on the conditional model for $Y \mid \bfX$, but also on structured information in the marginal distribution of $\bfX$. The framework highlights how design information can be used both to improve efficiency and to reduce target distortion. The framework also unifies multiple existing methods as special cases and motivates new methodology for modern regression problems. Theoretically, we establish oracle properties for the proposed estimator under this framework, quantify the contribution of latent-effect approximation and quadratic regularization to the estimation error, and also compare the proposed work with a benchmark sparse procedure that ignores design information. We provide a theoretical answer to the question of how design information can improve structural estimation while preserving first-order predictive performance. The results of this paper bring a new perspective for improving estimation in a broad class of regression problems, including clustered data analysis \citep{ye2025variable,ye2026synthetic}, causal inference with hidden confounding \citep{zivich2023proximal,miao2018proxy}, and modern settings involving heterogeneous effects or distribution shift \citep{kunzel2019metalearners,sugiyama2007covariate}.

\section{Design-assisted regression} \label{sec:method}
\subsection{Notations}
Throughout the article, the superscript $0$ is used to denote the true value of a given parameter. Let $\bfZ_i = (Y_i, \bfX_i)$ for $i = 1, \ldots, n$ be the observed data variables, where $Y_i \in \reals$ is the outcome variable and $\bfX_i = (X_{1,i}, \ldots, X_{p,i})^\top \in \reals^p$ is the covariate vector. Let $P_{\bfX}$ denote the marginal distribution of $\bfX$ and let $\hat P_{\bfX}$ denote its empirical distribution. We write $\hat S = S(\hat P_{\bfX})$ for a statistic or structured summary extracted from the empirical design distribution. We denote $\bfbeta \in \reals^p$ be the structural parameter of interest in a given model, with support $\cfM = \{1 \leq j \leq p: \beta^0_{j} \neq 0\}$, $s=|\cfM|$. Under a scientifically meaningful data-generating mechanism, we regard $\bfbeta_{\cfM}$ as scientifically interpretable signals, while the remaining coordinates in $\bfbeta$ are nuisance parameters. 

We assume that the outcome $Y$ depends not only on $\bfX$, but also on an unobserved latent covariate $U$, which is not uncommon. For example, the latent effect may represent cluster effect, subject-specific heterogeneity, unobserved confounding, missing data, or a combination of multiple sources. We assume that $U$ cannot be directly observed and its distribution cannot be fully specified. In such a scenario, the design information is useful when the association between $\bfX$ and $U$ can be partially characterized through the marginal distribution of $\bfX$.

To assist estimation with the design distribution, we introduce two additional quantities. First, $\bfK(\hat S)$ is a positive semidefinite matrix that determines a quadratic regularization geometry. This term carries geometric information about the distribution of predictors, by incorporating which stabilizes weakly identified directions of $\bfbeta$ and improves estimation efficiency. Second, $\bfH(\bfX; \hat S) \in \reals^q$ is a feature mapping constructed from the design distribution, which is intended to approximate the component within $U$ that is associated with $\bfX$. 

\subsection{Model and estimator} \label{subsec:model}
Consider the following latent-effect model which represents a scientifically plausible data-generating mechanism, 
\be \label{latent_model}
Y_i = m(\bfX_i; \bfbeta^0, U_i) + \vareps_i, \qquad \E(\vareps_i \mid \bfX_i, U_i)=0.
\ee
For the simplicity of exposition, one may think of a partially additive model
\be \label{latent_additive}
Y_i = \mu(\bfX_i^\top \bfbeta^0 + U_i) + \vareps_i,
\ee
where $\mu(\cdot)$ is an appropriate link or mean function. The distribution of $\bfX$ carries auxiliary information about the inference of the target parameter $\bfbeta_{\cfM}$ such that ignoring this information can result in inefficient or biased estimation of $\bfbeta_{\cfM}$. The most crucial reason for that is the association between $U_i$ and $\bfX_i$ in many cases. Suppose that, if $U_i \indep \bfX_i$, or more generally, if $\E(U_i \mid \bfX_i) = \E(U_i)$, then the latent effect does not generate design-induced target distortion. On the other hand, if $\E(U_i \mid \bfX_i) \neq \E(U_i)$, then part of the latent effect becomes predictable from $\bfX_i$, and therefore ignoring this dependence will shift the estimation target away from the underlying truth $\bfbeta^0$. This target distortion problem can be more severer in high-dimensional settings for that even population-level independence between $U_i$ and $\bfX_i$ does not preclude substantial sample-level spurious correlation. The induced target distortion can be a finite-sample phenomenon. In either case when $U_i$ is predictable from $\bfX_i$, a usual penalized estimator will use scientifically irrelevant coordinates of $\bfX_i$ to proxy for the latent effect $U_i$, thereby inducing bias in the estimated structural coefficients. 

Motivated by this observation, we assume that the component $U_i$ associated with $\bfX_i$ can be approximated through a design-assisted feature map:
\be \label{latent_proj}
U_i = \bfH(\bfX_i; \hat S)^\top \bfeta^0 + r_i,
\ee
where $\bfeta^0 \in \reals^q$ is an unknown nuisance parameter and $r_i$ is a residual latent component satisfying $\E \left\{(r_i)^2 \right\} \le \bar \delta_n^2(\hat S)$. Substituting (\ref{latent_proj}) into (\ref{latent_additive}) leads to an estimation procedure that augments the regression fit by $\bfH(\bfX_i; \hat S)^\top \bfeta$, thereby separating the structural effect $\bfbeta^0$ from the part of the latent effect that is systematically associated with $\bfX_i$. 

For a realized $\hat S$, define the corresponding plug-in population risk by
\be \label{genrisk}
R(\bfbeta, \bfeta; \hat S) = \E^\circ \left\{ \ell(\bfZ; \bfbeta, \bfeta, \hat S) \right\},
\ee
where $\E^\circ$ denotes expectation with $\hat S$ held fixed. We use $\bftheta^0 = (\bfbeta^{0,\top}, \bfeta^{0,\top})^\top$ as the oracle reference parameter. When the latent approximation is imperfect, $\bftheta^0$ need not exactly minimize $R(\cdot; \hat S)$; the resulting score discrepancy is controlled below through Condition (A5). With different loss functions, many familiar models can be viewed as special cases of (\ref{genrisk}). For example, by assuming that $Y \mid \bfX$ belongs to an exponential family with the canonical parameter $\xi = \bfX^\top \bfbeta + \bfH(\bfX; \hat S)^\top \bfeta$, then $\ell(\bfZ; \bfbeta, \bfeta, \hat S) = l\left(\bfX^\top \bfbeta + \bfH(\bfX; \hat S)^\top \bfeta; Y\right)$,
where $l(\xi; Y)$ is the negative log-likelihood function, and (\ref{genrisk}) yields a generalized linear model (GLM). In another example, given a quantile level $\tau \in (0,1)$, define $\rho_\tau(u) = u\{\tau - I(u < 0)\}$, then $\ell(\bfZ; \bfbeta, \bfeta, \hat S) = \rho_\tau \left( Y - \bfX^\top \bfbeta - \bfH(\bfX; \hat S)^\top \bfeta \right)$ yields a quantile regression model. In an example of survival analysis, let $Y = (T, \Delta)$, where $T$ is the observed failure time and $\Delta$ is the censoring indicator. Defining
\ba
\ell(\bfZ; \bfbeta, \bfeta, \hat S) = -\Delta \left[ \bfX^\top \bfbeta + \bfH(\bfX; \hat S)^\top \bfeta - \log \left\{ \sum_{j: T_j \geq T} \exp\left( \bfX_j^\top \bfbeta + \bfH(\bfX_j; \hat S)^\top \bfeta \right) \right\} \right]
\ea
yields the Cox-model partial likelihood. 
It is noteworthy that some of these special cases require non-smooth loss functions, such as quantile regression, or dependent estimating criteria, such as the Cox-model partial likelihood. Our developed theory in Section \ref{sec:thm} requires sufficiently smooth risks so that some modifications to the empirical-process and curvature conditions are needed. 

The corresponding {\it design-assisted sparse estimator} for the general framework (\ref{genrisk}) is defined as
\be \label{genest} 
\begin{aligned}
(\hat{\bfbeta},\hat{\bfeta})
= \underset{\bfbeta \in \reals^p, \bfeta \in \reals^q}{\arg\min}
\Biggl[
& \frac{1}{n}\sum_{i=1}^n \ell(\bfZ_i; \bfbeta, \bfeta, \hat S)
+ \sum_{j=1}^p p_{\lambda_1}(|\beta_j|) + \frac{\lambda_2}{2} \bfbeta^\top \bfK(\hat S)\bfbeta \\
& + \sum_{k=1}^q q_{\lambda_3} (|\eta_k|)
\Biggr],
\end{aligned}
\ee
where $p_{\lambda_1}(\cdot)$ and $q_{\lambda_3}(\cdot)$ are penalty functions. The three components of the regularization in (\ref{genest}) reflect different types of information contained in $\bfX$. By assuming sparsity on $\bfbeta^0$, the penalty on $\bfbeta$, such as LASSO or SCAD \citep{tibshirani1996regression,fan2001variable}, aims to remove irrelevant covariates and thereby achieve simultaneous variable selection and parameter estimation. The quadratic term $\bfbeta^\top \bfK(\hat S)\bfbeta$ uses information from the design distribution to stabilize estimation and reduce variation by shrinking the estimator along directions that are poorly identified or highly variable. In particular, when $\bfK(\hat S)$ is related to $\Cov(\bfX)$, it regularizes the variability of the linear predictor $\bfbeta^\top \bfX$. The augmentation term $\bfH(\bfX_i; \hat S)^\top \bfeta$ is specified to capture the part of the latent effect that is associated with $\bfX_i$, thereby reducing bias in $\hat{\bfbeta}$. This nuisance component can be estimated using regularization methods with consistent prediction of $\bfH(\bfX_i;\hat S)^\top \bfeta$.

\subsection{Special examples} \label{subsec:exp}
Our proposed framework is general, so that various modern regression problems with informative design distribution can be unified under this framework. We discuss four special examples in: clustered data analysis, causal inference, missing data, and survival analysis, with the latter two illustrated in the supplementary material. In each case, the latent effect $U_i$ represents an unmodeled source of variation, the design summary $\hat S = S(\hat P_{\bfX})$ is used to construct a feature map $H(\bfX_i; \hat S)$ that captures the part of $U_i$ associated with $\bfX_i$, and a matrix $K(\hat S)$ is used to stabilize weakly identified directions. It is noteworthy that our proposed method is generic and additional problem-specific assumptions or certain modifications are required for some of these special examples. 

{\it Clustered data analysis.}
Suppose observations are indexed by cluster $i = 1, \ldots, m$ and unit $j = 1, \ldots, n_i$, and consider the model
\be \label{clustered}
Y_{ij} = \mu\!\left( \bfX_{ij}^\top \bfbeta^0 + \bfW_{ij}^\top \bfU_i \right) + \vareps_{ij},
\qquad
\E(\vareps_{ij} \mid \bfX_{ij}, \bfW_{ij},\bfU_i) = 0,
\ee
where $\bfU_i$ is an unobserved cluster-specific latent coefficient vector. This formulation includes the random-intercept model when $\bfW_{ij} \equiv 1$, as well as other general mixed-effects models, such as the random-slope model, when $\bfW_{ij}$ contains selected covariates. If the latent cluster effects are associated with the within-cluster covariate distribution, then a natural choice of design-assisted feature map is
\ba
H(\bfX_{ij}; \hat S) = \left( \bar{\bfX}_i^\top, \bfG_i^\top \right)^\top,
\qquad
\bar{\bfX}_i = \frac{1}{n_i} \sum_{j=1}^{n_i} \bfX_{ij},
\ea
where $\bfG_i$ contains additional cluster-level summaries of the covariates, such as higher-order moments and covariance features. In this way, $H(\bfX_{ij}; \hat S)^\top \bfeta^0$ approximates the component of the cluster-specific latent effect that is associated with the observed within-cluster design information. This extends existing work of correlated-random-effects constructions, in which cluster means are used to account for dependence between regressors and latent cluster effects \citep{mundlak1978pooling}. Here, the cluster-level summaries are treated as nuisance features for separating latent heterogeneity from the structural coefficients. If the estimand itself contains a distinct between-cluster contextual effect, that effect should be parameterized separately rather than absorbed into $H$.

{\it Causal inference with hidden confounding.}
Consider an observational study with treatment $A_i$, observed confounders $\bfX_i$, and an unobserved confounder $U_i$. Suppose that $A_i$ is binary and the treatment assignment depends on both $\bfX_i$ and $U_i$. The outcome model is
\ba
Y_i = m(A_i, \bfX_i; \bfbeta^0, U_i) + \vareps_i,
\qquad
\E(\vareps_i \mid A_i, \bfX_i, U_i)=0.
\ea
Let $Y_i(a)$ denote the potential outcome under treatment level $a$. In this setting, ignorability given $\bfX_i$ alone may fail, in the sense that $Y_i(a) \nindep A_i \mid \bfX_i$, because the latent structure $U_i$ may affect both treatment assignment and outcome. If some observed variables act as proxies for the hidden confounder, then $H(\bfX_i; \hat S)$ can be constructed from these proxies or their low-dimensional summaries. For instance, if $\bfV_i \subset \bfX_i$ denotes a designated proxy subset, one
may specify 
\ba
H(\bfX_i; \hat S) = \left( \bfV_i^\top, \phi_1(\bfV_i), \ldots, \phi_q(\bfV_i) \right)^\top,
\ea
where $\phi_1, \ldots, \phi_q$ are either basis functions, factor scores, or other estimated summaries. 

When the proxy variables can be separated into treatment-inducing proxies $\bfZ_i$ and outcome-inducing proxies $\bfW_i$, the proximal-causal literature suggests an additional construction. Specifically, one may first estimate $\bfW_{c,i} = \E(\bfW_i \mid A_i, \bfX_i, \bfZ_i)$, and then include the fitted proximal-control feature $\widehat{\bfW}_{c,i}$ in the nuisance
map $H$. This construction is motivated by proximal causal learning, where $\bfW_c$ serves as a proxy-control representation of the latent confounding component. In the present framework, however, $H(\bfX_i; \hat S)^\top \bfeta$ is not interpreted as a structural part of the outcome model. Rather, it is a flexible nuisance approximation designed to reduce the component of hidden confounding that distorts estimation of the treatment effect \citep{miao2018proxy,zivich2023proximal,park2024single,tchetgen2024introduction}

\section{Theoretical properties} \label{sec:thm}
\subsection{Regularity conditions}
In this section, we study the theoretical properties of the proposed {\it design-assisted estimator} (\ref{genest}) in the general framework. All proofs of the theorems in this section are included in the supplementary material. Let
\ba
R_n(\bfbeta, \bfeta; S) = \frac{1}{n} \sum_{i=1}^n \ell(\bfZ_i; \bfbeta, \bfeta, S)
\ea
be the empirical risk. We say that the design-assisted estimator (\ref{genest}) is well specified under the following regularity conditions:
\begin{enumerate}
\item[(A1)] (Sparsity) There exist sequences $s_n$ and $t_n$ such that $\|\bfbeta^0\|_0 = s_n$, $\|\bfeta^0\|_0 = t_n$, and $(s_n + t_n) \log(p+q)=o(n)$.
\item[(A2)] (Effective latent-effect approximation)
The oracle nuisance coefficient $\bfeta^0 = \bfeta^0(\hat S)$ satisfies
\ba
\frac{1}{n} \sum_{i=1}^n \left\{ U_i - \bfH(\bfX_i; \hat S)^\top \bfeta^0 \right\}^2 = O_p\! \left(\bar\delta_n^2 (\hat S) \right).
\ea
\item[(A3)] (Design moments and tail control)
The covariate vector $\bfX$ is sub-Gaussian with mean zero and covariance matrix $\bfSigma_X$ satisfying
\ba
0 < c_X \le \lambda_{\min} (\bfSigma_X) \le \lambda_{\max}(\bfSigma_X) \le C_X < \infty.
\ea
Moreover, the coordinates of $\bfH(\bfX_i; \hat S)$ satisfy
\ba
\max_{1\le k \le q} \frac{1}{n} \sum_{i=1}^n H_k^2(\bfX_i; \hat S) = O_p(1).
\ea
\item[(A4)] (Local identification) There exists a neighborhood $\cfN_0$ of $\bftheta^0$, constants $0 < \kappa_0 \le \kappa_1 < \infty$, and $a > 0$ such that, with probability tending to one, for every $\bftheta \in \cfN_0$ and
\ba
\bfv \in \cfC(a,\cfA) = \left\{ \bfv \in \reals^{p+q}: \|\bfv_{\cfA^c}\|_1 \le a \|\bfv_{\cfA}\|_1 \right\},
\qquad
\cfA = \supp(\bftheta^0),
\ea
we have $\kappa_0 \|\bfv\|_2^2 \le \bfv^\top \nabla^2_{\bftheta} R(\bftheta; \hat S)\bfv \le \kappa_1 \|\bfv\|_2^2.$
\item[(A5)] (Score control)
For $a_{n,\beta} = \sqrt{(\log p)/n}$ and $a_{n,\eta} = \sqrt{(\log q)/n}$, we have
\ba
\left\| \nabla_{\bfbeta} R_n(\bftheta^0; \hat S) - \nabla_{\bfbeta} R(\bftheta^0; \hat S) \right\|_\infty = O_p(a_{n,\beta}), ~
\left\| \nabla_{\bfeta} R_n(\bftheta^0; \hat S) - \nabla_{\bfeta} R(\bftheta^0; \hat S) \right\|_\infty = O_p(a_{n,\eta}).
\ea
In addition, uniformly over $\bfv \in \cfC(a,\cfA)$,
\[
\left| \nabla_{\bftheta} R(\bftheta^0; \hat S)^\top \bfv \right| = O_p \! \left\{ \bar \delta_n(\hat S) \|\bfv\|_2 \right\}.
\]
\item[(A6)] (Restricted Hessian concentration) There exists a neighborhood $\cfN_0$ of $\bftheta^0$ such that
\[
\sup_{\bftheta \in \cfN_0} \| \nabla_{\bftheta}^2 R_n(\bftheta; \hat S) - \nabla_{\bftheta}^2 R(\bftheta; \hat S) \|_{op} = O_p(1).
\]
\item[(A7)] (Penalty regularity) The penalty functions $p_{\lambda_1}(\cdot)$ and $q_{\lambda_3}(\cdot)$ are nonnegative, symmetric, nondecreasing on $[0,\infty)$, and satisfy $p_{\lambda_1}(0) = q_{\lambda_3}(0) = 0$. They are continuously differentiable on $(0,\infty)$ with $p_{\lambda_1}'(0+) = \lambda_1$, $q_{\lambda_3}'(0+) = \lambda_3$, $\sup_{t>0}|p_{\lambda_1}'(t)| \le \lambda_1$, and $\sup_{t>0}|q_{\lambda_3}'(t)| \le \lambda_3$. The tuning parameters satisfy $\lambda_1 \asymp a_{n,\beta}$, $ \lambda_3 \asymp a_{n,\eta}$, and $\lambda_3 = O(\lambda_1)$.
\item[(A8)] (Quadratic design regularity) The matrix $\bfK(\hat S)$ is symmetric positive semidefinite and satisfies $\|\bfK(\hat S)\|_{op} = O_p(1)$ and $\|\lambda_2 \bfK(\hat S) \bfbeta^0\|_\infty = O(\lambda_2)$, with $\lambda_2 \go 0$ and $s_n \lambda_2 \go 0$.
\end{enumerate}
Here, Conditions (A1)–(A3) define the sparsity structure, latent-effect approximation, and design regularity. Conditions (A4)–(A8) are standard high-dimensional assumptions ensuring local identifiability, score and Hessian concentrations, penalty regularity, and stable quadratic shrinkage. Condition (A5) separates ordinary stochastic score fluctuation from the score discrepancy induced by the remaining latent approximation error. In particular, when $H(\bfX; \hat S)$ contains summaries of covariates that also enter the structural design, there must remain sufficient variation to distinguish $\bfX^\top \bfbeta$ from $H(\bfX; \hat S)^\top \bfeta$ on the relevant restricted parameter space. In some applications such as the causal inference setting, however, only a low-dimensional component of $\bfbeta$ is scientifically targeted. In such cases, full identification of every nuisance coefficient is not necessary; it is sufficient that the target direction remain identifiable after profiling out the nuisance space.

\subsection{Oracle properties}
\begin{theorem} \label{thm1}
Under (A1)-(A8), the design-assisted estimator (\ref{genest}) satisfies:
\[
\begin{aligned}
\text{(I)}\quad
&\lVert( \hat{\bfbeta} - \bfbeta^0, \hat{\bfeta} - \bfeta^0) \rVert_2^2 = O_p\! \left( (s_n + t_n) \lambda_1^2 + \lambda_2^2s_n + \bar \delta_n^2(\hat S) \right),\\[-0.25ex]
\text{(II)} \quad &\lVert \hat{\bfbeta} - \bfbeta^0 \rVert_1 = O_p\! \left( s_n (\lambda_1 + \lambda_2) + \sqrt{s_n}\, \bar \delta_n(\hat S) \right),\\[-0.25ex]
\text{(III)} \quad &R(\hat{\bfbeta}, \hat{\bfeta}; \hat S) -R(\bfbeta^0, \bfeta^0; \hat S) = O_p\! \left( (s_n + t_n) \lambda_1^2 + \lambda_2^2 s_n + \bar\delta_n^2(\hat S) \right).
\end{aligned}
\]
\end{theorem}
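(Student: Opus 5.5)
We follow the standard basic-inequality route for penalized M-estimators. Write $\hat\bftheta=(\hat\bfbeta^\top,\hat\bfeta^\top)^\top$, $\bfDelta=\hat\bftheta-\bftheta^0=(\bfDelta_\beta^\top,\bfDelta_\eta^\top)^\top$, and $\cfA=\supp(\bftheta^0)$, with $|\cfA|=s_n+t_n$.

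\textbf{Step 1 (basic inequality).} Optimality of $\hat\bftheta$ in (\ref{genest}) compared with $\bftheta^0$ gives
\[
R_n(\hat\bftheta;\hat S)-R_n(\bftheta^0;\hat S)+\textstyle\sum_j\{p_{\lambda_1}(|\hat\beta_j|)-p_{\lambda_1}(|\beta^0_j|)\}+\sum_k\{q_{\lambda_3}(|\hat\eta_k|)-q_{\lambda_3}(|\eta^0_k|)\}+\frac{\lambda_2}{2}(\hat\bfbeta^\top\bfK\hat\bfbeta-\bfbeta^{0\top}\bfK\bfbeta^0)\le 0 .
\]
We Taylor-expand $R_n$ around $\bftheta^0$. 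The first-order term splits into three pieces: the stochastic part $\{\nabla R_n-\nabla R\}(\bftheta^0)^\top\bfDelta$, the approximation part $\nabla R(\bftheta^0)^\top\bfDelta$, and the quadratic part. The stochastic part is bounded by (A5) as $O_p(\lambda_1)\|\bfDelta_\beta\|_1+O_p(\lambda_3)\|\bfDelta_\eta\|_1$, using (A7) to tie $a_{n,\beta},a_{n,\eta}$ to $\lambda_1,\lambda_3$. For the ridge term we write $\hat\bfbeta^\top\bfK\hat\bfbeta-\bfbeta^{0\top}\bfK\bfbeta^0=2\bfDelta_\beta^\top\bfK\bfbeta^0+\bfDelta_\beta^\top\bfK\bfDelta_\beta$. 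The last summand is nonnegative by (A8), and the cross term is at most $O(\lambda_2)\|\bfDelta_\beta\|_1$ by the $\ell_\infty$ bound in (A8).

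\textbf{Step 2 (penalty decomposition and cone).} By (A7) (subadditivity-type bounds from $\sup|p'|\le\lambda_1$ and $p'(0+)=\lambda_1$), the penalty difference is at least $\lambda_1\|\bfDelta_{\beta,\cfA^c}\|_1-\lambda_1\|\bfDelta_{\beta,\cfA}\|_1$, and similarly for $\bfeta$ with $\lambda_3$. For folded-concave penalties this holds only in a local sense; we work inside $\cfN_0$ and use a standard localization argument, showing a convex-combination point stays in $\cfN_0$. Choose the constants in $\lambda_1\asymp a_{n,\beta}$ large enough that the stochastic term absorbs at most half of the penalty. The term $O(\lambda_2)\|\bfDelta_\beta\|_1$ is absorbed into the $\lambda_1$ slack when $\lambda_2=O(\lambda_1)$; otherwise it is carried separately on $\cfA$. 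The approximation term $O_p(\bar\delta_n)\|\bfDelta\|_2$ either is dominated, in which case $\bfDelta\in\cfC(a,\cfA)$, or else directly forces $\|\bfDelta\|_2\lesssim\bar\delta_n$. In both cases (A4) and (A5) become applicable.

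\textbf{Step 3 (curvature and rates).} On the cone, the second-order term equals $\frac12\bfDelta^\top\nabla^2R_n(\tilde\bftheta)\bfDelta$. We want this to be at least $\frac{\kappa_0}{4}\|\bfDelta\|_2^2$, combining (A4) with (A6) to pass from $R$ to $R_n$. This is the main obstacle, since (A6) as stated only gives $O_p(1)$ operator-norm control. We would try to read it, or strengthen it via sub-Gaussianity (A3) and restricted-eigenvalue concentration on sparse cones, as a deviation of at most $\kappa_0/2$ uniformly on $\cfC(a,\cfA)\cap\cfN_0$. Granting that bound, the basic inequality becomes
\[
\tfrac{\kappa_0}{4}\|\bfDelta\|_2^2\le C\{\lambda_1\sqrt{s_n+t_n}+\lambda_2\sqrt{s_n}+\bar\delta_n\}\|\bfDelta\|_2 .
\]
Here we used $\|\bfDelta_{\cfA}\|_1\le\sqrt{s_n+t_n}\|\bfDelta\|_2$ and $\|\bfDelta_{\beta,\cfA}\|_1\le\sqrt{s_n}\|\bfDelta_\beta\|_2$, together with $\lambda_3=O(\lambda_1)$. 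Squaring gives (I). For (II), the cone gives $\|\bfDelta_\beta\|_1\le(1+a)\|\bfDelta_{\beta,\cfA}\|_1\lesssim\sqrt{s_n}\|\bfDelta\|_2$, which yields $s_n(\lambda_1+\lambda_2)+\sqrt{s_n}\bar\delta_n$ once the $t_n$ contribution is accounted for: we take $\lambda_3$ small relative to $\lambda_1$ so the $\eta$-part of the rate is dominated, or bound $\bfDelta_\beta$ on its own sub-cone. For (III), expand $R(\hat\bftheta)-R(\bftheta^0)=\nabla R(\bftheta^0)^\top\bfDelta+\frac12\bfDelta^\top\nabla^2R(\tilde\bftheta)\bfDelta$. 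The first term is $O_p(\bar\delta_n\|\bfDelta\|_2)$ by (A5), and the second is at most $\kappa_1\|\bfDelta\|_2^2$ by (A4). Substituting (I) and using $\bar\delta_n\|\bfDelta\|_2\le\bar\delta_n^2+\|\bfDelta\|_2^2$ gives the stated excess-risk rate.
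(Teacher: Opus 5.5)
Your outline follows the paper's proof almost step for step: the basic inequality, the split of the score into stochastic and approximation parts via (A5), the expansion of the ridge term, a cone from (A7), curvature from (A4) and (A6), and Taylor expansion plus Young's inequality for (III). The steps you hedge on are exactly the ones the paper only asserts, and you are right to doubt them. Several of your proposed patches do not close them, though, so the argument has genuine gaps, which the paper's proof shares.

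\textbf{Curvature.} An $O_p(1)$ bound in (A6) cannot yield $\bfDelta^\top\nabla^2 R_n(\bar{\bftheta};\hat S)\bfDelta\ge\frac{\kappa_0}{2}\|\bfDelta\|_2^2$. The paper writes this ``by (A4) and (A6)'' with nothing further. A strengthened version cannot be derived from (A3) either, for three reasons:
\begin{itemize}
\item (A3) makes only $\bfX$ sub-Gaussian;
\item for the coordinates of $\bfH(\bfX_i;\hat S)$ it gives only bounded empirical second moments;
\item $\hat S$ is built from the same sample.
\end{itemize}
So restricted-eigenvalue concentration for the augmented design is not implied. You would have to add a restricted form of (A6) as a hypothesis, with deviation at most $\kappa_0/2$ uniformly over $\bftheta\in\mathcal{N}_0$ and $\bfv\in\mathcal{C}(a,\mathcal{A})$.

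\textbf{The cone step.} This is the other real problem.
\begin{itemize}
\item Writing the approximation term as $O_p(\bar\delta_n)\|\bfDelta\|_2$ already uses (A5) on the cone. In the second branch of your dichotomy you are off $\mathcal{C}(a,\mathcal{A})$, where neither that bound nor the curvature in (A4) is available. So nothing forces $\|\bfDelta\|_2\lesssim\bar\delta_n(\hat S)$ there. The paper sidesteps this by citing ``standard decomposability arguments'' and never treats the approximation term. Closing it needs an extra condition. One option is $\|\nabla R(\bftheta^0;\hat S)\|_\infty$ at most a small multiple of the penalty level, so the bias is absorbed. Another is curvature and a bias bound valid beyond the cone, such as restricted strong convexity with an $\ell_1$ tolerance term.
\item $\bfK(\hat S)\bfbeta^0$ need not be supported on $\mathcal{M}$. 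Your bound $O(\lambda_2)\|\bfDelta_\beta\|_1$ on the ridge cross term is correct, but its off-support part defeats the cone when $\lambda_2\gg\lambda_1$, which (A8) allows. It cannot be ``carried on $\mathcal{A}$.''
\item Your convex-combination localization needs a convex objective. It covers the LASSO but not folded-concave penalties. For those, the lower bound $\sum_{j\notin\mathcal{M}}p_{\lambda_1}(|\Delta_{\beta,j}|)\ge\lambda_1\|\bfDelta_{\beta,\mathcal{M}^c}\|_1$ also fails unless a weak-convexity correction is absorbed by $\kappa_0$.
\end{itemize}

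\textbf{Part (II).} You cannot shrink $\lambda_3$ relative to $\lambda_1$ at will: (A7) fixes $\lambda_3\asymp a_{n,\eta}$ because $\lambda_3$ must dominate the $\bfeta$-score. The argument actually gives
$\|\hat{\bfbeta}-\bfbeta^0\|_1=O_p\{s_n(\lambda_1+\lambda_2)+\sqrt{s_nt_n}\,\lambda_3+\sqrt{s_n}\,\bar\delta_n(\hat S)\}$.
Even this relies on a $\bfbeta$-only cone, which the joint cone does not imply; you and the paper both use it. The stated rate additionally requires something like $t_n\log q=O(s_n\log p)$. The paper's proof of (II) drops the middle term without comment.
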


Theorem \ref{thm1} shows that the proposed estimator is controlled by the stochastic error, quadratic shrinkage, and latent-effect approximation. 
When the residual latent component that is not captured by $H(\bfX_i;\hat S)^\top \bfeta$ is orthogonal to the score directions of the regression model, it behaves as additional noise. In that case, the relevant target is a marginal parameter, which under suitable conditions remains approximately proportional to $\bfbeta^0$. Let
\ba
\bfbeta^{\mathrm m} = \underset{\bfbeta \in \reals^p}{\arg\min} R_{\mathrm m}(\bfbeta; \hat S),
\qquad
R_{\mathrm m}(\bfbeta; \hat S) = \E^\circ \! \left\{ \ell_m(\bfZ; \bfbeta, \hat S) \right\},
\ea
where the marginal loss is single-index in the sense that $\ell_m(\bfZ; \bfbeta, \hat S) = \tilde \ell(Y, \bfbeta^\top \bfX; \hat S)$ for some differentiable function $\tilde \ell(y, t; \hat S)$, and let $\psi(y, t; \hat S) = \frac{\partial}{\partial t} \tilde \ell(y, t; \hat S)$. The exact form of $\ell_{\mathrm m}$ is model dependent. In a linear model, exogeneity of $r_i$ yields $\bfbeta^{\mathrm m} = \bfbeta^0$ exactly. In nonlinear models, however, marginalization may instead produce a rescaled or approximately proportional target parameter.

For marginal models, we further assume the following regularity conditions:
\begin{enumerate}
\item[(A9)] $\bfX_i$ is elliptically distributed, so that the linearity condition
\ba
\E(\bfX_i \mid \bfbeta^{0,\top} \bfX_i = t) = \frac{\bfSigma_X \bfbeta^0}{\bfbeta^{0,\top} \bfSigma_X \bfbeta^0}\, t
\ea
holds, and $\bfSigma_X$ is positive definite;

\item[(A10)] There exists a sequence $\rho_n \ge 0$ and, for each scalar $c$ in a neighborhood of $c_0$, a measurable function $h_c(\cdot)$ such that
\ba
\E^\circ \! \left\{ \psi\! \left(Y_i, c\, \bfbeta^{0,\top} \bfX_i; \hat S \right) \mid \bfX_i \right\} = h_c(\bfbeta^{0,\top} \bfX_i) + e_{c,i},
\ea
where the remainder satisfies $\left\| \E^\circ (\bfX_i e_{c,i}) \right\|_2 \le \rho_n$;

\item[(A11)] 
Let $T = \bfbeta^{0,\top} \bfX_i$, there exists a scalar $c_0 \neq 0$ such that $\E^\circ \! \left\{T\, h_{c_0}(T) \right\} = 0$.
\end{enumerate}
Conditions (A9)-(A11) require that the marginal score $\psi$ can be approximately determined by a single-index direction $\bfbeta^{0, \top} \bfX_i$, up to an error of order $\rho_n$. Condition (A9) is a standard linearity condition under elliptical distributions; Condition (A10) requires the residual latent effect to preserve this approximate single-index score structure, and Condition (A11) excludes the degenerate case in which the marginal signal is zero.
\begin{proposition} \label{prop0}
Under regularity conditions (A9)-(A11), suppose that the marginal risk $R_{\mathrm m} (\bfbeta; \hat S)$ is locally strongly convex in a neighborhood of $c_0 \bfbeta^0$, then $\left\| \bfbeta^{\mathrm m} - c_0\bfbeta^0 \right\|_2 = O(\rho_n)$.
\end{proposition}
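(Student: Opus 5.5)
The plan is to show that the population score of the marginal risk, evaluated at $c_0\bfbeta^0$, has norm $O(\rho_n)$. Local strong convexity then turns this small gradient into an $O(\rho_n)$ bound on the distance from $c_0\bfbeta^0$ to the minimizer $\bfbeta^{\mathrm m}$. This is the classical Li--Duan argument for single-index models, adapted to allow an approximation remainder.

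\textbf{Step 1: reduce the score to an inner product with $\bfX_i$.} Differentiating under the integral sign gives
\[
\nabla_{\bfbeta} R_{\mathrm m}(c_0\bfbeta^0;\hat S) = \E^\circ\{\bfX_i\,\psi(Y_i, c_0 T;\hat S)\}, \qquad T=\bfbeta^{0,\top}\bfX_i .
\]
Conditioning on $\bfX_i$ and applying (A10) with $c=c_0$, this equals
\[
\E^\circ\{\bfX_i h_{c_0}(T)\} + \E^\circ(\bfX_i e_{c_0,i}).
\]
The second term has Euclidean norm at most $\rho_n$, again by (A10).

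\textbf{Step 2: show the first term vanishes.} Condition on $T$ and use the linearity condition (A9):
\[
\E^\circ\{\bfX_i h_{c_0}(T)\} = \E^\circ\{\E(\bfX_i\mid T)\,h_{c_0}(T)\} = \frac{\bfSigma_X\bfbeta^0}{\bfbeta^{0,\top}\bfSigma_X\bfbeta^0}\,\E^\circ\{T h_{c_0}(T)\}.
\]
The last factor is zero by (A11). Hence $\|\nabla R_{\mathrm m}(c_0\bfbeta^0;\hat S)\|_2\le\rho_n$.

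\textbf{Step 3: convert the gradient bound into a parameter bound.} By local strong convexity, there are $\mu>0$ and a ball $B$ around $c_0\bfbeta^0$ such that
\[
\{\nabla R_{\mathrm m}(\bfbeta)-\nabla R_{\mathrm m}(c_0\bfbeta^0)\}^\top(\bfbeta-c_0\bfbeta^0)\ge\mu\|\bfbeta-c_0\bfbeta^0\|_2^2 \quad\text{on } B.
\]
Apply this at $\bfbeta=\bfbeta^{\mathrm m}$, where $\nabla R_{\mathrm m}(\bfbeta^{\mathrm m})=0$, and use Cauchy--Schwarz. This gives $\mu\|\bfbeta^{\mathrm m}-c_0\bfbeta^0\|_2\le\rho_n$.

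\textbf{Main obstacle.} The main difficulty is justifying that $\bfbeta^{\mathrm m}$ actually lies in $B$, since convexity is assumed only locally. I would handle this as follows. Let $\tilde\bfbeta$ minimize $R_{\mathrm m}$ over the closed ball $B$ of radius $r$. The inequality $R_{\mathrm m}(\tilde\bfbeta)\le R_{\mathrm m}(c_0\bfbeta^0)$, combined with strong convexity, gives $\|\tilde\bfbeta-c_0\bfbeta^0\|_2\le 2\rho_n/\mu$. Once $\rho_n$ is small relative to $r$, this places $\tilde\bfbeta$ in the interior of $B$, so it is a local minimizer and a stationary point.

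I would then identify $\tilde\bfbeta$ with $\bfbeta^{\mathrm m}$. This holds if $\bfbeta^{\mathrm m}$ is understood as the (locally unique) minimizer in that neighborhood, or if $R_{\mathrm m}$ is globally convex, as it is for GLM-type losses. I would state this interpretation explicitly.

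Two routine points also need a remark. The first is exchanging differentiation and expectation, which holds under a dominated-convergence/integrability assumption on $\bfX\psi$. The second is that $\bfbeta^{0,\top}\bfSigma_X\bfbeta^0>0$, which follows from positive definiteness in (A9) and $\bfbeta^0\ne0$. If $\rho_n$ does not tend to zero, the bound holds trivially once constants are adjusted, since attention is restricted to the bounded neighborhood.
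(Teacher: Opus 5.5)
Your proposal is correct and matches the paper's proof: split the score at $c_0\bfbeta^0$ using (A10), remove the main term with the linearity condition (A9) and (A11), and convert the resulting $O(\rho_n)$ gradient bound into a parameter bound via local strong convexity. Your additional argument that $\bfbeta^{\mathrm m}$ lies in the strong-convexity neighborhood (or is the local minimizer there) fills in a step the paper takes for granted.
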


Thus, if $\rho_n=0$, then locally $\bfbeta^{\mathrm m} = c_0 \bfbeta^0$, and hence $\supp(\bfbeta^{\mathrm m}) = \supp(\bfbeta^0)$. More generally, since $\|\bfbeta^{\mathrm m} - c_0 \bfbeta^0\|_\infty = O(\rho_n)$, if $\rho_n = o(\tau_n)$, where $\tau_n = o\{ \min_{j \in \cfM} |c_0 \beta_j^0| \}$, then $\{ j: |\beta_j^{\mathrm m}| > \tau_n \} = \cfM$.

For comparison purposes, we consider the estimator $\tilde{\bfbeta}$ that uses only the sparsity-inducing penalty on $\bfbeta$ as a benchmark:
\be \label{benchest}
\tilde{\bfbeta} = \underset{\bfbeta \in \reals^p}{\arg\min} \left[ \frac{1}{n}\sum_{i=1}^n \ell(\bfZ_i;\bfbeta,\bfzero,\hat S) + \sum_{j=1}^p p_{\tilde{\lambda}_1}(|\beta_j|) \right].
\ee
Similarly to (\ref{genrisk}), the relevant target parameter is defined through
\ba
\bfbeta^\star = \underset{\bfbeta \in \reals^p}{\arg\min} R_{0}(\bfbeta; \hat S),
\qquad
R_{0}(\bfbeta; \hat S) = \E^\circ \{ \ell(\bfZ;\bfbeta, \bfzero, \hat S) \}.
\ea
To distinguish the benchmark tuning level from that of the design-assisted estimator, we use $\tilde\lambda_1$ for the sparsity penalty in (\ref{benchest}). This distinction is necessary because $\tilde{\bfbeta}$ targets $\bfbeta^\star$, which may be less sparse than $\bfbeta^0$ due to the distortion of the target. Similarly, we denote $\tilde{s}_n = \|\bfbeta^\star\|_0$.

\begin{theorem} \label{thm2}
Suppose that (A1)-(A3) and (A5)-(A7) hold for the risk $R_{0}(\bfbeta; \hat S)$. The benchmark estimator (\ref{benchest}) satisfies
\[
\begin{aligned}
\text{(I)}\quad
&\lVert \tilde{\bfbeta} - \bfbeta^\star \rVert_1
 = O_p\! \left( \tilde{s}_n \tilde{\lambda}_1 \right),\\[-0.25ex]
\text{(II)}\quad
&\lVert \tilde{\bfbeta} - \bfbeta^\star \rVert_2^2
 = O_p\! \left( \tilde{s}_n \tilde{\lambda}_1^2 \right),\\[-0.25ex]
\text{(III)}\quad
&R_0(\tilde{\bfbeta}; \hat S) - R_0(\bfbeta^\star; \hat S)
 = O_p\! \left( \tilde{s}_n \tilde{\lambda}_1^2 \right).
\end{aligned}
\]
\end{theorem}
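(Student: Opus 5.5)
Theorem~\ref{thm2} is the special case of Theorem~\ref{thm1} with no nuisance block, no quadratic term and no approximation error, so I would rerun the same argument with $\bfeta \equiv \bfzero$, $\lambda_2 = 0$, and target $\bfbeta^\star$. Because $\bfbeta^\star$ minimizes $R_0(\cdot;\hat S)$ exactly, $\nabla R_0(\bfbeta^\star;\hat S) = \bfzero$, so the $\bar\delta_n$ discrepancy in (A5) disappears. The score condition then reduces to $\|\nabla R_{0,n}(\bfbeta^\star)\|_\infty = O_p(\tilde a_n)$, where $R_{0,n}$ is the empirical benchmark risk, $\tilde a_n = \sqrt{(\log p)/n}$, and $\tilde\lambda_1 \asymp \tilde a_n$. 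Throughout, I read ``(A1)--(A3), (A5)--(A7) hold for $R_0$'' as applying with $\bfbeta^0$ replaced by $\bfbeta^\star$, $s_n$ by $\tilde s_n$, and $\cfA$ by $\tilde\cfA = \supp(\bfbeta^\star)$. I would also need a restricted eigenvalue for $\nabla^2 R_0$ on a cone $\cfC(a,\tilde\cfA)$ near $\bfbeta^\star$. (A4) is not listed among the hypotheses, so I would either invoke it implicitly for $R_0$, or derive a lower curvature bound from (A3) together with strong convexity of $\tilde\ell$ in the linear predictor. This is the point I expect to need care.

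\textbf{Step 1 (basic inequality and cone).} Set $\bfv = \tilde\bfbeta - \bfbeta^\star$. Optimality of $\tilde\bfbeta$ gives
\[
R_{0,n}(\tilde\bfbeta) - R_{0,n}(\bfbeta^\star) \le \sum_j \{p_{\tilde\lambda_1}(|\beta_j^\star|) - p_{\tilde\lambda_1}(|\tilde\beta_j|)\}.
\]
By (A7), $p_{\tilde\lambda_1}$ is $\tilde\lambda_1$-Lipschitz, and its growth near $0$ is $\tilde\lambda_1$. I would first handle the LASSO case directly, then treat the folded-concave case the same way using the local-linear bound $p(|t|) \ge \tilde\lambda_1|t| - $ (curvature), restricted to a neighborhood. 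For the LASSO case this yields a right-hand side of $\tilde\lambda_1(\|\bfv_{\tilde\cfA}\|_1 - \|\bfv_{\tilde\cfA^c}\|_1)$. For the left side, convexity or a Taylor expansion gives a lower bound of $\nabla R_{0,n}(\bfbeta^\star)^\top \bfv + (\text{curvature term})$. The linear term is at most $\tilde\lambda_1\|\bfv\|_1/2$ once the constant in $\tilde\lambda_1 \asymp \tilde a_n$ is large enough, on an event of probability tending to one. Combining these gives $\|\bfv_{\tilde\cfA^c}\|_1 \le 3\|\bfv_{\tilde\cfA}\|_1$, so $\bfv \in \cfC(3,\tilde\cfA)$.

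\textbf{Step 2 (curvature and rates).} For $\bfv$ in the cone, a second-order Taylor expansion along the segment gives the empirical Hessian. By (A6) it is close to the population Hessian, and by the restricted eigenvalue it is at least $\kappa_0\|\bfv\|_2^2/2$ up to the concentration error. Where (A6) only gives $O_p(1)$, I would instead use the restricted (cone-wise) version, or the standard localization argument: compare on the sphere $\|\bfv\|_2 = r$ and use convexity to conclude the minimizer lies inside it. This step, establishing restricted strong convexity of the empirical benchmark risk from the stated conditions, is the main obstacle. I would resolve it via localization plus sub-Gaussian design (A3).

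With curvature in hand, $\kappa_0\|\bfv\|_2^2 \lesssim \tilde\lambda_1\|\bfv_{\tilde\cfA}\|_1 \le \tilde\lambda_1\sqrt{\tilde s_n}\|\bfv\|_2$. This gives (II), $\|\bfv\|_2^2 = O_p(\tilde s_n\tilde\lambda_1^2)$. The cone then gives (I), $\|\bfv\|_1 \le 4\|\bfv_{\tilde\cfA}\|_1 \le 4\sqrt{\tilde s_n}\|\bfv\|_2 = O_p(\tilde s_n\tilde\lambda_1)$. For (III), use $\nabla R_0(\bfbeta^\star) = 0$ and the upper curvature bound $\kappa_1$: $R_0(\tilde\bfbeta) - R_0(\bfbeta^\star) \le \kappa_1\|\bfv\|_2^2/2 = O_p(\tilde s_n\tilde\lambda_1^2)$. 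This requires $\tilde\bfbeta$ to lie in the neighborhood $\cfN_0$, which follows because $\tilde s_n\tilde\lambda_1^2 \to 0$ by (A1).
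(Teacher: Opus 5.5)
Your proposal is correct and is essentially the paper's proof. The paper writes the basic inequality $Q_{0,n}(\tilde{\bfbeta};\hat S)\le Q_{0,n}(\bfbeta^\star;\hat S)$ and then reruns the argument of Theorem~\ref{thm1} (score bound from (A5), cone condition from the penalty, restricted curvature, Taylor expansion of the risk) with the $\bfeta$-block and the quadratic term removed. The points you flag are exactly what the paper's one-line proof leaves implicit: $\nabla_{\bfbeta}R_0(\bfbeta^\star;\hat S)=\bfzero$ removes the $\bar\delta_n$ term; an (A4)-type restricted curvature bound for $R_0$ is needed (the constant $\kappa_0'$ later used in Corollary~\ref{cor2}); and the Hessian concentration must be cone-restricted rather than merely $O_p(1)$. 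So your version is, if anything, more careful than the paper's.
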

Theorem \ref{thm2} shows the corresponding oracle rates for the benchmark estimator around its own population target $\bfbeta^\star$.

\begin{corollary} \label{cor1}
Let $\bfDelta^\star = \bfbeta^\star - \bfbeta^{\mathrm m}$ and $\bfDelta^{\mathrm m} = \bfbeta^{\mathrm m} - \bfbeta^0$. Suppose (A1)-(A11) hold for the risk $R_{0}(\bfbeta; \hat S)$. If there exists a constant $b_0 > 0$ and an integer $N_0 \ge 1$ such that
\be \label{non-negligible}
\left\| \nabla_{\bfbeta} R_0(\bfbeta^{\mathrm m}; \hat S) - \nabla_{\bfbeta} R_m(\bfbeta^{\mathrm m}; \hat S) \right\|_2 \ge b_0
\ee
for all $n > N_0$, then
\begin{enumerate}
\item[(I)] $\|\bfDelta^\star\|_2 \not\to 0$.
\item[(II)] When $\|\bfDelta^\star\|_0 = O(\tilde s_n)$, $\tilde{\bfbeta} - \bfbeta^{\mathrm m} = \bfDelta^\star + O_p(\tilde s_n \tilde \lambda_1).$
\end{enumerate}
\end{corollary}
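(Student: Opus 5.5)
For part (I), the plan is to argue by contradiction using the first-order conditions of the two population problems. Since $\bfbeta^\star$ minimizes $R_0(\cdot;\hat S)$ and $\bfbeta^{\mathrm m}$ minimizes $R_{\mathrm m}(\cdot;\hat S)$, we have $\nabla_{\bfbeta}R_0(\bfbeta^\star;\hat S)=\bfzero$ and $\nabla_{\bfbeta}R_{\mathrm m}(\bfbeta^{\mathrm m};\hat S)=\bfzero$. Write
\[
\nabla_{\bfbeta}R_0(\bfbeta^{\mathrm m};\hat S)-\nabla_{\bfbeta}R_{\mathrm m}(\bfbeta^{\mathrm m};\hat S)
=\nabla_{\bfbeta}R_0(\bfbeta^{\mathrm m};\hat S)-\nabla_{\bfbeta}R_0(\bfbeta^{\star};\hat S).
\]
By the upper curvature bound in (A4), transferred to $R_0$ as the hypothesis allows, the gradient of $R_0$ is Lipschitz with constant $\kappa_1$ on the relevant neighborhood and cone. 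We will justify that $\bfbeta^{\mathrm m}-\bfbeta^\star$ lies in that neighborhood and cone, or else use a mean-value bound along the segment. This gives
\[
b_0\le \kappa_1\|\bfbeta^{\mathrm m}-\bfbeta^\star\|_2=\kappa_1\|\bfDelta^\star\|_2
\]
for all $n>N_0$. Hence $\|\bfDelta^\star\|_2\ge b_0/\kappa_1$ is bounded away from zero, and in particular $\|\bfDelta^\star\|_2\not\to 0$. If $\bfDelta^\star$ leaves the neighborhood $\cfN_0$, the conclusion holds trivially because the neighborhood has fixed radius. So we split into these two cases.

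For part (II), we decompose
\[
\tilde{\bfbeta}-\bfbeta^{\mathrm m}=(\tilde{\bfbeta}-\bfbeta^\star)+(\bfbeta^\star-\bfbeta^{\mathrm m})=(\tilde{\bfbeta}-\bfbeta^\star)+\bfDelta^\star,
\]
and invoke Theorem \ref{thm2}(I), which gives $\|\tilde{\bfbeta}-\bfbeta^\star\|_1=O_p(\tilde s_n\tilde\lambda_1)$. The $\ell_1$ bound dominates the $\ell_2$ and $\ell_\infty$ norms, so the remainder is $O_p(\tilde s_n\tilde\lambda_1)$ in any of these norms, which is the claim. The assumption $\|\bfDelta^\star\|_0=O(\tilde s_n)$ enters in two places. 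First, it guarantees that the support of $\bfbeta^\star$ is at most of order $\tilde s_n$ beyond that of $\bfbeta^{\mathrm m}$, so the sparsity and cone conditions required in Theorem \ref{thm2} are consistent with $\tilde s_n=\|\bfbeta^\star\|_0$. Second, it ensures that the restricted-eigenvalue arguments used there apply to $\bfbeta^\star$ as the centering point. Proposition \ref{prop0} can additionally be used to rewrite $\bfbeta^{\mathrm m}=c_0\bfbeta^0+O(\rho_n)$, so the display can be read as a distortion relative to the scaled structural target.

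The main obstacle is part (I): turning the gradient gap into a lower bound on $\|\bfDelta^\star\|_2$. This requires a global or at least restricted Lipschitz bound on $\nabla R_0$ covering both $\bfbeta^\star$ and $\bfbeta^{\mathrm m}$, whereas (A4) is only local and restricted to the cone $\cfC(a,\cfA)$. I would first handle this with the dichotomy described above: either $\bfbeta^\star$ lies outside a fixed ball around $\bfbeta^{\mathrm m}$, which is immediate, or it lies inside, where the integral form of the mean value theorem with the $\kappa_1$ bound applies. I would use the $\ell_0$ control on $\bfDelta^\star$ to place it in a cone of the form $\cfC(a,\cfA)$ after enlarging $\cfA$ suitably.
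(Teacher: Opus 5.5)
Your proposal follows the paper's proof: for (I) the paper likewise combines $\nabla_{\bfbeta}R_{\mathrm m}(\bfbeta^{\mathrm m};\hat S)=\bfzero$ and $\nabla_{\bfbeta}R_0(\bfbeta^\star;\hat S)=\bfzero$ with a mean-value bound on the segment between $\bfbeta^{\mathrm m}$ and $\bfbeta^\star$ to get $\|\bfDelta^\star\|_2\ge b_0/C_0$, and for (II) it uses exactly your decomposition plus Theorem~2(I). The one point to fix is where the Lipschitz constant comes from: it has to be a local operator-norm bound on $\nabla^2_{\bfbeta}R_0$, which the paper simply assumes as $C_0$; the cone-restricted upper bound $\kappa_1$ in (A4) does not control $\|\nabla^2_{\bfbeta}R_0(\bar{\bfbeta})\bfDelta^\star\|_2$, and the $\ell_0$ hypothesis you want for enlarging the cone is only available in part (II), whereas your ball dichotomy and integral mean-value form are extra care the paper omits.
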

A sufficient condition for (\ref{non-negligible}) is that $\bfeta^0 \neq \bfzero$ and the proxy component is not orthogonal to the $\bfbeta$-score direction. Corollary \ref{cor1} formalizes the benchmark target distortion: under (\ref{non-negligible}), $\tilde{\bfbeta}$ is centered at a parameter different from $\bfbeta^{\mathrm m}$ or $\bfbeta^0$.

\begin{corollary} \label{cor2}
Suppose the assumptions of Theorem \ref{thm1} hold with $\delta_n(\hat S) = 0$. Let $\kappa_0'>0$ denote the restricted curvature constant for the benchmark risk $R_0(\bfbeta; \hat S)$. Assume further that, on the restricted cone $\cfC(a, \cfM)$, we have
\ba
\bfv^\top \left\{ \nabla_{\bfbeta}^2 R(\bfbeta^0, \bfeta^0; \hat S) + \lambda_2 \bfK(\hat S) \right\} \bfv \ge \kappa_K \|\bfv\|_2^2
\ea
for some $\kappa_K>\kappa_0'$. Then
\ba
\|\hat{\bfbeta} - \bfbeta^0\|_2^2 = O_p\left( \frac{s_n (\lambda_1 + \lambda_2)^2}{\kappa_K^2} \right),
\qquad
\|\tilde{\bfbeta} - \bfbeta^\star\|_2^2 = O_p\left( \frac{\tilde{s}_n \tilde{\lambda}_1^2}{\kappa_0'^{,2}} \right).
\ea
\end{corollary}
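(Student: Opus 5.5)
The plan is to rerun the basic-inequality argument behind Theorem \ref{thm1} and Theorem \ref{thm2}, this time tracking the curvature constants explicitly instead of absorbing them into $O_p(\cdot)$.

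\emph{Design-assisted bound.} Set $\hat{\bfDelta} = (\hat{\bfbeta}-\bfbeta^0, \hat{\bfeta}-\bfeta^0)$. Start from optimality of $(\hat{\bfbeta},\hat{\bfeta})$ in (\ref{genest}) against $\bftheta^0$, and Taylor-expand $R_n$ around $\bftheta^0$. The quadratic term contributes
\[
\tfrac{\lambda_2}{2}\{\hat{\bfbeta}^\top\bfK\hat{\bfbeta} - \bfbeta^{0,\top}\bfK\bfbeta^0\} = \lambda_2 \bfbeta^{0,\top}\bfK(\hat{\bfbeta}-\bfbeta^0) + \tfrac{\lambda_2}{2}(\hat{\bfbeta}-\bfbeta^0)^\top\bfK(\hat{\bfbeta}-\bfbeta^0).
\]
Unlike in the proof of Theorem \ref{thm1}, I will keep the second-order piece on the left-hand side and merge it with the Hessian of $R$. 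On the linear side, the following terms all have $\ell_\infty$-size $O_p(\lambda_1+\lambda_2)$ when paired with $\hat{\bfDelta}$ in $\ell_1$:
\begin{itemize}
\item the score fluctuation, $O_p(a_{n,\beta}+a_{n,\eta}) = O_p(\lambda_1)$ by (A5) and (A7);
\item the population score, which vanishes since $\bar\delta_n=0$ (this is how I read the assumption $\delta_n(\hat S)=0$);
\item the shrinkage bias $\lambda_2\bfK\bfbeta^0$, controlled by (A8).
\end{itemize}
The usual choice of $\lambda_1$ above a constant multiple of the score size, together with the decomposability of the penalties under (A7), then places $\hat{\bfDelta}$ in the cone $\cfC(a,\cfA)$. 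So $\|\hat{\bfDelta}\|_1 \lesssim \sqrt{s_n+t_n}\,\|\hat{\bfDelta}\|_2$. Restricting to the $\bfbeta$-block as the corollary does, the cone is $\cfC(a,\cfM)$ and the $\ell_1$ factor is $\sqrt{s_n}$.

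\emph{Curvature step.} The empirical Hessian at an intermediate point differs from $\nabla^2 R(\bftheta^0)$ through two errors:
\begin{itemize}
\item the empirical-versus-population gap, controlled by (A6);
\item continuity of $\nabla^2R$ in $\cfN_0$, which contributes $o_p(1)$ once consistency from Theorem \ref{thm1} places the estimator in $\cfN_0$.
\end{itemize}
This is the main obstacle: (A6) only gives $O_p(1)$, not $o_p(1)$. I would first read (A6) together with (A4) as guaranteeing that the restricted empirical curvature is a fixed fraction of the population one, say at least $c\,\kappa_K$. That gives a left-hand side of at least $c\,\kappa_K\|\hat{\bfDelta}\|_2^2$ by the displayed condition of the corollary. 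The right-hand side is at most $C(\lambda_1+\lambda_2)\sqrt{s_n}\|\hat{\bfDelta}\|_2$. Dividing gives
\[
\|\hat{\bfbeta}-\bfbeta^0\|_2 \lesssim \sqrt{s_n}(\lambda_1+\lambda_2)/\kappa_K,
\]
and squaring gives the first claim.

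\emph{Benchmark bound.} Repeat the same argument for (\ref{benchest}) around $\bfbeta^\star$. The population score at $\bfbeta^\star$ is zero because $\bfbeta^\star$ minimizes $R_0$, and there is no quadratic term. The cone is then built on $\supp(\bfbeta^\star)$, so the $\ell_1$ factor is $\sqrt{\tilde s_n}$. Restricted curvature $\kappa_0'$ for $R_0$ then yields $\|\tilde{\bfbeta}-\bfbeta^\star\|_2\lesssim\sqrt{\tilde s_n}\,\tilde\lambda_1/\kappa_0'$, which is the second claim. This is exactly the proof of Theorem \ref{thm2}(II) with the constant retained.

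The comparison $\kappa_K>\kappa_0'$ is not needed for the rates themselves. It only serves to show that the design-assisted constant is better when $s_n\asymp\tilde s_n$ and the tuning levels are comparable.
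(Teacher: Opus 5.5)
Your proposal is essentially the paper's proof. The paper also writes the basic inequality as $\frac{\kappa_K}{2}\|\bfDelta_\beta\|_2^2 \le C_1\lambda_1\|\bfDelta_\beta\|_1 + C_2\lambda_2\|\bfDelta_\beta\|_1$, treating the empirical restricted curvature as a fixed fraction of the population one (your reading of (A4) and (A6)) and dropping the score discrepancy because $\bar\delta_n(\hat S)=0$; it then uses the cone condition to get $\|\bfDelta_\beta\|_1 \le (1+a)\sqrt{s_n}\|\bfDelta_\beta\|_2$, divides by $\kappa_K$, and repeats this around $\bfbeta^\star$ with $\kappa_0'$ and $\sqrt{\tilde s_n}$.

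One step is justified neither by you nor by the paper: passing from the joint error $(\bfDelta_\beta,\bfDelta_\eta)$ to a $\bfbeta$-only inequality (your ``restricting to the $\bfbeta$-block''). The joint cone $\cfC(a,\cfA)$ does not imply $\|\bfDelta_{\beta,\cfM^c}\|_1 \le a\|\bfDelta_{\beta,\cfM}\|_1$, and the curvature hypothesis covers only the $\bfbeta$-block. So the cross-Hessian $\nabla^2_{\bfbeta\bfeta}R$ would need a Schur-complement (profiled) version of that hypothesis; without it the argument only returns the Theorem \ref{thm1}(I) bound, with $s_n+t_n$ and $\kappa_0$ in place of $s_n$ and $\kappa_K$.
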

That is, when $\lambda_2 = o(\lambda_1)$, both estimators have the same first-order rate, but since $\kappa_K > \kappa_0'$, the proposed estimator has a strictly sharper oracle bound than the benchmark estimator. A natural requirement for $K(S)$ is that it be symmetric positive semidefinite and that it contribute nontrivial curvature on the restricted cone in (A5). The simplest choice is $K(S) = \bfI_p$. In this case, for every $\bfv \in \cfC(a,\cfM)$,
\ba
\bfv^\top \left\{ \nabla_{\bfbeta}^2 R(\bfbeta^0, \bfeta^0; \hat S) + \lambda_2 \bfI_p \right\} \bfv \ge (\kappa_0 + \lambda_2) \|\bfv\|_2^2,
\ea
so that one may take $\kappa_K = \kappa_0 + \lambda_2$. Therefore, $K(S) = \bfI_p$ automatically satisfies the strengthened curvature condition in Corollary \ref{cor2}. This is the most direct analogue of ridge regression, where the penalty improves conditioning by adding uniform curvature in every direction \citep{hoerl1970ridge}. A second natural choice is $K(S) = \Cov(\bfX)$ or a regularized estimate of it. In that case, if there exists $\mu_K > 0$ such that $\bfv^\top \Cov(\bfX) \bfv \ge \mu_K \|\bfv\|_2^2$ for all $\bfv \in \cfC(a, \cfM)$, then
\ba
\bfv^\top \left\{ \nabla_{\bfbeta}^2 R(\bfbeta^0, \bfeta^0; \hat S) + \lambda_2 \Cov(\bfX) \right\} \bfv \ge (\kappa_0 + \lambda_2 \mu_K) \|\bfv\|_2^2.
\ea
Hence one may take $\kappa_K = \kappa_0 + \lambda_2 \mu_K > \kappa_0$. If $K(S)$ is constructed from the eigendecomposition of $\Cov(\bfX)$, then directions with small design curvature may be selectively regularized by assigning larger weights to the corresponding eigenspaces. In all such cases, the role of $K(S)$ is to enlarge the restricted curvature constant and therefore improve the oracle bound in Corollary \ref{cor2}. For the linear model, the quadratic term adds $\lambda_2 \bfK(\hat S)$ to the population Hessian, so directions on which $\bfK(\hat S)$ is positive acquire additional curvature. A formal decomposition is given in Supplementary Proposition S1.

The efficiency gain in Corollary \ref{cor2} is not new in itself; it is essentially the classical effect of ridge-type quadratic regularization in improving the conditioning of the risk function and stabilizing estimation along weakly identified directions \citep{hoerl1970ridge,tikhonov1977illposed,negahban2012unified}. We include this result here to emphasize that, within the present framework, such efficiency gain can be interpreted as one consequence of incorporating structured information from the marginal distribution of $\bfX$ through $K(S)$. 

Thus, relative to the benchmark estimator $\tilde{\bfbeta}$, Corollary \ref{cor1} shows that $\hat{\bfbeta}$ can reduce bias due to the target distortion, while Corollary \ref{cor2} shows that it can also reduce variance through the quadratic term $\bfbeta^\top \bfK(\hat S)\bfbeta$. Under Conditions (A9)-(A11), these improvements are achieved without altering the sparsity pattern, since $\bfbeta^{\mathrm m}$ remains proportional to $\bfbeta^0$. The preceding theorems suggest that the marginal distribution of $\bfX$ should play an explicit role in modern regression. In our framework, the design summary $S = S(P_{\bfX})$ affects estimation through two conceptually distinct objects. The matrix $K(S)$ controls the geometry of the estimation problem and appears in the quadratic penalty $\bfbeta^\top K(S)\bfbeta$, while the feature map $H(S)$ is used to learn the nuisance component of the latent effect that is associated with $\bfX$. 

To compare prediction performance, the quantity $R_0(\bfbeta^\star; \hat S) - R(\bftheta^0; \hat S)$ measures the loss of restricting the predictor to the benchmark class. When Conditions (A9)-(A11) hold, this comparison is compatible with structural support recovery because $\bfbeta^{\mathrm m}$ remains proportional to $\bfbeta^0$. If this quantity is small, then the benchmark predictor remains competitive for prediction. 

\begin{corollary} \label{cor3}
Under conditions (A1)-(A8), suppose that there exists a constant $C>0$ such that for all $\bfbeta$ in a neighborhood of $\bfbeta^0$,
\be \label{cond1}
R_0(\bfbeta; \hat S) - R(\bfbeta^0, \bfeta^0; \hat S) \le C\, \E^\circ\! \left[ \left\{ \bfH(\bfX_i; \hat S)^\top \bfeta^0 -\bfX_i^\top (\bfbeta - \bfbeta^0) \right\}^2 \right] + C\, \bar\delta_n^2(\hat S)    
\ee
and
\be \label{cond2}
\inf_{\bfbeta \in \reals^p} \E^\circ\! \left[ \left\{ \bfH(\bfX_i; \hat S)^\top \bfeta^0 - \bfX_i^\top(\bfbeta - \bfbeta^0) \right\}^2 \right] = O\!\left( \bar\delta_n^2(\hat S) \right),
\ee
we have
\ba
R_0(\tilde{\bfbeta}; \hat S) - R(\bfbeta^0, \bfeta^0; \hat S) = O\left( \bar\delta_n^2(\hat S) \right) + O_p\left( \tilde{s}_n \tilde{\lambda}_1^2 \right),
\ea
whereas
\ba
R(\hat{\bfbeta}, \hat{\bfeta}; \hat S) - R(\bfbeta^0, \bfeta^0; \hat S) = O_p\left( (s_n + t_n) \lambda_1^2 + \lambda_2^2 s_n + \bar\delta_n^2(\hat S) \right).
\ea
\end{corollary}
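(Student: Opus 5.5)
The second display is exactly part (III) of Theorem \ref{thm1}, so only the benchmark bound needs an argument. For that I will insert the benchmark population target and write
\[
R_0(\tilde{\bfbeta};\hat S)-R(\bfbeta^0,\bfeta^0;\hat S)
=\bigl\{R_0(\tilde{\bfbeta};\hat S)-R_0(\bfbeta^\star;\hat S)\bigr\}
+\bigl\{R_0(\bfbeta^\star;\hat S)-R(\bfbeta^0,\bfeta^0;\hat S)\bigr\}.
\]
The first bracket is the excess risk of the benchmark estimator around its own target. By Theorem \ref{thm2}(III) it is $O_p(\tilde s_n\tilde\lambda_1^2)$, since the corollary's hypotheses are the ones needed to invoke Theorem \ref{thm2} for $R_0$. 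The second bracket is deterministic given $\hat S$, and it remains to show that it is $O(\bar\delta_n^2(\hat S))$.

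To bound the second bracket, I will let $\bfbeta^\dagger$ attain, or nearly attain within $\bar\delta_n^2$, the infimum in (\ref{cond2}). I will then argue that $\bfbeta^\dagger$ lies in the neighborhood of $\bfbeta^0$ where (\ref{cond1}) applies. Under (A3), $\bfSigma_X$ has eigenvalues bounded below by $c_X$, so
\[
c_X\|\bfbeta^\dagger-\bfbeta^0\|_2^2\le \E^\circ\{(\bfX^\top(\bfbeta^\dagger-\bfbeta^0))^2\}.
\]
By the triangle inequality in $L_2$, the right-hand side is at most
\[
2\,\E^\circ\{(\bfH^\top\bfeta^0)^2\}+2\,O(\bar\delta_n^2).
\]
This keeps $\bfbeta^\dagger$ in a bounded set. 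If that set is not already small, I will use the freedom in the neighborhood, or read (\ref{cond1}) as holding on a neighborhood large enough to contain the $L_2$-projection point; this localization is the step I expect to be delicate. Given localization, (\ref{cond1}) and (\ref{cond2}) together give
\[
R_0(\bfbeta^\dagger;\hat S)-R(\bfbeta^0,\bfeta^0;\hat S)\le C\cdot O(\bar\delta_n^2)+C\bar\delta_n^2 .
\]
Because $\bfbeta^\star$ minimizes $R_0(\cdot;\hat S)$, we have $R_0(\bfbeta^\star;\hat S)\le R_0(\bfbeta^\dagger;\hat S)$. Hence the second bracket is at most $O(\bar\delta_n^2(\hat S))$.

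Adding the two brackets yields the stated bound on the benchmark excess risk. This is an upper bound, which is all the statement claims; I will not attempt a lower bound. The conclusion is that both procedures achieve the same first-order predictive risk up to $\bar\delta_n^2$ plus their respective stochastic terms, even though Corollary \ref{cor1} shows their parameter targets differ. The main obstacle is justifying that the near-minimizer in (\ref{cond2}) falls inside the neighborhood where (\ref{cond1}) holds. If this cannot be secured, the fallback is to interpret (\ref{cond1}) as a global condition. A secondary subtlety is making sure the $O$ in the second bracket is deterministic, conditional on $\hat S$, while the $O_p$ comes only from Theorem \ref{thm2}.
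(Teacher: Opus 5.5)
Your proposal is correct and follows the paper's proof. It uses the same split through $R_0(\bfbeta^\star;\hat S)$, Theorem~\ref{thm2}(III) for the first bracket, minimality of $\bfbeta^\star$ combined with (\ref{cond1}) and (\ref{cond2}) for the second, and Theorem~\ref{thm1}(III) for the second display. On localization, the paper simply applies (\ref{cond1}) at every $\bar{\bfbeta}\in\reals^p$ before taking the infimum, i.e.\ it adopts your global fallback, and that reading is genuinely required rather than optional: the minimizer in (\ref{cond2}) is $\bfbeta^0+\bfSigma_X^{-1}\E^\circ\{\bfX_i\bfH(\bfX_i;\hat S)^\top\bfeta^0\}$, which does not approach $\bfbeta^0$ in the distortion regime of Corollary~\ref{cor1}, so your eigenvalue argument cannot place it in a shrinking neighborhood.
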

That is, if $s_n + t_n = O(\tilde{s}_n)$ and $\lambda_2^2 s_n = o(1)$, the proposed estimator and the benchmark estimator have the same first-order prediction risk up to stochastic estimation error. The condition (\ref{cond1}) links the benchmark risk gap to the quality of approximating the nuisance component $\bfH(\bfX_i; \hat S)^\top \bfeta^0$ by a linear combination of $\bfX_i$, while (\ref{cond2}) requires this approximation error to be no larger than order $\bar \delta_n^2(\hat S)$.

Corollary \ref{cor3} shows that $\tilde{\bfbeta}$ may remain competitive for prediction even when it is biased for $\bfbeta^0$. Its excess prediction risk is controlled by the quality with which the nuisance component $H(\bfX_i; \hat S)^\top \bfeta^0$ can be represented within the benchmark predictor class. Therefore, when the latent-effect proxy is well aligned with the linear span of the observed covariates, the benchmark oracle and the full oracle are prediction-equivalent up to a small approximation error. 

\subsection{Theoretical results on linear models}
We illustrate the oracle properties through the linear latent-effect model:
\be \label{lm}
Y_i = \bfX_i^\top \bfbeta^0 + U_i + \vareps_i,
\qquad
\E^\circ(\vareps_i \mid \bfX_i,U_i)=0,
\ee
where the corresponding design-assisted estimator is
\be \label{lm_est}
\begin{aligned}
(\hat{\bfbeta}, \hat{\bfeta}) = \underset{\bfbeta \in \reals^p, \bfeta \in \reals^q}{\arg\min} \Biggl[ & \frac{1}{2n} \|\bfY - \bfX \bfbeta - \bfH(\hat S) \bfeta\|_2^2 + \sum_{j=1}^p p_{\lambda_1}(|\beta_j|) \\
& + \frac{\lambda_2}{2} \bfbeta^\top \bfK(\hat S) \bfbeta
+ \sum_{k=1}^q q_{\lambda_3}(|\eta_k|) \Biggr].
\end{aligned}
\ee

\begin{proposition} \label{prop1}
Under the linear model (\ref{lm}), when $\E (\bfX_i \bfX_i^\top)$ is invertible, the benchmark population parameter $\bfbeta^\star$ satisfies
\ba
\bfbeta^\star = \bfbeta^0 + \left\{ \E (\bfX_i \bfX_i^\top) \right\}^{-1} \E^\circ \left[ \bfX_i U_i \right].
\ea
\end{proposition}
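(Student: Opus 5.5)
The plan is to compute $\bfbeta^\star$ directly from its definition as the minimizer of the benchmark population risk $R_0(\bfbeta;\hat S)$, specialized to the squared-error loss used in (\ref{lm_est}). With $\bfeta=\bfzero$ the loss is $\ell(\bfZ;\bfbeta,\bfzero,\hat S)=\frac12(Y-\bfX^\top\bfbeta)^2$. Hence
\[
R_0(\bfbeta;\hat S)=\tfrac12\,\E^\circ\{(Y_i-\bfX_i^\top\bfbeta)^2\},
\]
which is a quadratic function of $\bfbeta$. Its Hessian is $\E(\bfX_i\bfX_i^\top)$, and under the stated invertibility (together with positive semidefiniteness) this Hessian is positive definite. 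So $R_0$ is strictly convex, and $\bfbeta^\star$ is its unique stationary point.

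The key steps, in order, are as follows. First, I will substitute the model (\ref{lm}) to write $Y_i-\bfX_i^\top\bfbeta=\bfX_i^\top(\bfbeta^0-\bfbeta)+U_i+\vareps_i$. Second, I will differentiate under the expectation, which is justified by the finite second moments of $\bfX$ from (A3) and the square-integrability of $U_i$ and $\vareps_i$ implicit in the model. This gives the normal equation
\[
\E^\circ\bigl[\bfX_i\{\bfX_i^\top(\bfbeta^0-\bfbeta)+U_i+\vareps_i\}\bigr]=\bfzero .
\]
Third, I will eliminate the noise term. By the tower property and $\E^\circ(\vareps_i\mid\bfX_i,U_i)=0$, we get $\E^\circ(\bfX_i\vareps_i)=\E^\circ\{\bfX_i\E^\circ(\vareps_i\mid\bfX_i,U_i)\}=\bfzero$. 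The normal equation then reduces to
\[
\E(\bfX_i\bfX_i^\top)(\bfbeta-\bfbeta^0)=\E^\circ(\bfX_iU_i).
\]
Finally, inverting $\E(\bfX_i\bfX_i^\top)$ yields the claimed expression for $\bfbeta^\star$.

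The only point needing care is bookkeeping rather than substance. The expectation $\E^\circ$ holds $\hat S$ fixed, but neither the benchmark loss nor the model involves $\hat S$. Therefore $\E^\circ(\bfX_i\bfX_i^\top)$ coincides with $\E(\bfX_i\bfX_i^\top)$ as written in the statement. I also need to confirm that $\E^\circ(\bfX_iU_i)$ is finite, which follows from Cauchy--Schwarz once $U_i$ has a finite second moment. I expect no real obstacle; the main thing to state explicitly is the integrability justifying the interchange of differentiation and expectation. A remark worth recording afterward is that the distortion term vanishes exactly when $\E^\circ(\bfX_iU_i)=\bfzero$. This is consistent with the discussion following (\ref{latent_additive}) and with Corollary \ref{cor1}.
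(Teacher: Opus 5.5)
Your proposal is correct and follows essentially the same route as the paper: write the quadratic benchmark risk, impose the first-order condition at $\bfbeta^\star$, eliminate $\E^\circ(\bfX_i\vareps_i)$ via the tower property, and invert $\E(\bfX_i\bfX_i^\top)$. Your extra remarks on strict convexity (which also guarantees that $\bfbeta^\star$ exists and is unique) and on integrability are small refinements that the paper leaves implicit.
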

In the linear model, the target distortion term admits an explicit representation in Proposition \ref{prop1}. Thus, the general bias-reduction result in Corollary \ref{cor1} reduces to correcting the bias induced by $\E^\circ (\bfX_i U_i) \neq \bfzero$.

\begin{proposition} \label{prop3}
For the linear model (\ref{lm}), suppose $\E^\circ(\vareps_i \mid \bfX_i, U_i) = 0$, $\E^\circ(r_i \mid \bfX_i) = 0$, and $\E^\circ (r_i^2) \le \bar \delta_n^2(\hat S)$. Let
\ba
m_i^0 = \bfX_i^\top \bfbeta^0 + \bfH(\bfX_i; \hat S)^\top \bfeta^0,
\quad
m_i^\star = \bfX_i^\top \bfbeta^\star,
\quad
\bfbeta^\star = \arg\min_{\bfbeta \in \reals^p} \E^\circ \left\{ Y_i - \bfX_i^\top \bfbeta \right\}^2,
\ea
then
\ba
\E^\circ \left\{ (Y_i - m_i^\star)^2 \right\} = \E^\circ(\vareps_i^2) + \E^\circ(r_i^2) + \inf_{\bfbeta \in \reals^p} \E^\circ\left[ \left\{ \bfH(\bfX_i; \hat S)^\top \bfeta^0 - \bfX_i^\top (\bfbeta - \bfbeta^0) \right\}^2 \right].
\ea
If
\ba
\inf_{\bfbeta \in \reals^p} \E^\circ\left[ \left\{ \bfH(\bfX_i; \hat S)^\top \bfeta^0 - \bfX_i^\top (\bfbeta - \bfbeta^0) \right\}^2 \right] = O(\bar \delta_n^2(\hat S)),
\ea
then
\ba
\E^\circ \left\{ (Y_i - m_i^\star)^2 \right\} - \E^\circ \left\{ (Y_i - m_i^0)^2
\right\} = O(\bar \delta_n^2(\hat S)).
\ea
\end{proposition}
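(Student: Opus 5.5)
The plan is to write the outcome in an orthogonal form and then use the least-squares projection in the population inner product $\E^\circ$. Substituting the latent approximation (\ref{latent_proj}) into the linear model (\ref{lm}) gives
\ba
Y_i = m_i^0 + r_i + \vareps_i, \qquad m_i^0 = \bfX_i^\top\bfbeta^0 + \bfH(\bfX_i;\hat S)^\top\bfeta^0 .
\ee
For any $\bfbeta \in \reals^p$ we then have
\ba
Y_i - \bfX_i^\top\bfbeta = \vareps_i + r_i + g_i(\bfbeta), \qquad g_i(\bfbeta) = \bfH(\bfX_i;\hat S)^\top\bfeta^0 - \bfX_i^\top(\bfbeta-\bfbeta^0).
\ea
Here $g_i(\bfbeta)$ is a measurable function of $\bfX_i$ alone, since $\hat S$ is held fixed under $\E^\circ$.

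\textbf{Step 1: cross terms vanish.} I will expand $\E^\circ\{(Y_i-\bfX_i^\top\bfbeta)^2\}$ and show the three cross terms are zero.
\begin{itemize}
\item $\E^\circ(\vareps_i g_i)=0$ and $\E^\circ(\vareps_i r_i)=0$. Both follow by conditioning on $(\bfX_i,U_i)$: $g_i$ depends only on $\bfX_i$, and $r_i = U_i - \bfH(\bfX_i;\hat S)^\top\bfeta^0$ is $(\bfX_i,U_i)$-measurable, while $\E^\circ(\vareps_i\mid\bfX_i,U_i)=0$.
\item $\E^\circ(r_i g_i)=0$. This follows by conditioning on $\bfX_i$ and using $\E^\circ(r_i\mid\bfX_i)=0$.
\end{itemize}
This yields, for every $\bfbeta$,
\ba
\E^\circ\{(Y_i-\bfX_i^\top\bfbeta)^2\} = \E^\circ(\vareps_i^2)+\E^\circ(r_i^2)+\E^\circ\{g_i(\bfbeta)^2\}.
\ea

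\textbf{Step 2: identify the minimum.} Since $\bfbeta^\star$ minimizes the left side over $\bfbeta \in \reals^p$, and the first two terms on the right do not depend on $\bfbeta$, $\bfbeta^\star$ also minimizes $\E^\circ\{g_i(\bfbeta)^2\}$. Evaluating at $\bfbeta^\star$ gives the displayed identity with the infimum.

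The only delicate point is attainment of the infimum. I will handle it by noting that the map $\bfbeta \mapsto \E^\circ\{g_i(\bfbeta)^2\}$ is a convex quadratic. Its minimizer exists because $\E(\bfX_i\bfX_i^\top)$ is invertible (as in Proposition~\ref{prop1}, or by (A3)). If one prefers to avoid attainment, the argument still works: the identity holds for every $\bfbeta$, so taking infima on both sides gives the same conclusion directly.

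\textbf{Step 3: the risk gap.} Applying the Step~1 identity at the oracle, i.e.\ with $g_i \equiv 0$, uses the same orthogonality and gives
\ba
\E^\circ\{(Y_i-m_i^0)^2\} = \E^\circ(\vareps_i^2)+\E^\circ(r_i^2).
\ea
Subtracting this from the identity at $\bfbeta^\star$, the risk gap equals exactly the infimum term. Under the stated assumption that this infimum is $O(\bar\delta_n^2(\hat S))$, the gap is therefore $O(\bar\delta_n^2(\hat S))$.

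I do not expect a real obstacle. The one step needing care is the orthogonality of $r_i$ with functions of $\bfX_i$, which holds exactly because of the assumption $\E^\circ(r_i\mid\bfX_i)=0$. Finite second moments are needed throughout; these come from the sub-Gaussianity of $\bfX_i$ in (A3) together with $\E^\circ(r_i^2)\le\bar\delta_n^2(\hat S)$.
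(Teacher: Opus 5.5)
Your proposal is correct and is essentially the paper's argument: write $Y_i-\bfX_i^\top\bfbeta$ as the $\bfX_i$-measurable term $\bfH(\bfX_i;\hat S)^\top\bfeta^0-\bfX_i^\top(\bfbeta-\bfbeta^0)$ plus $r_i+\vareps_i$, use $\E^\circ(r_i+\vareps_i\mid\bfX_i)=0$ to remove the cross term, minimize over $\bfbeta$, and subtract the oracle risk $\E^\circ\{(r_i+\vareps_i)^2\}$. Your separate check that $\E^\circ(r_i\vareps_i)=0$ (because $r_i$ is $(\bfX_i,U_i)$-measurable for fixed $\hat S$) fills in a step the paper leaves implicit when it equates $\E^\circ\{(r_i+\vareps_i)^2\}$ with $\E^\circ(\vareps_i^2)+\E^\circ(r_i^2)$. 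Your closing remark overreaches slightly, though: (A3) does not give finite second moments for $\vareps_i$ or $\bfH(\bfX_i;\hat S)^\top\bfeta^0$, so both you and the paper must simply assume them.
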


Proposition \ref{prop3} is a linear-model illustration of the general prediction-comparability result in Corollary \ref{cor3}. In the linear model, the benchmark target $\bfbeta^\star$ is exactly the population least-squares projection of $m_i^0$ onto the span of $\bfX_i$. Hence the prediction gap is given by the projection error of $\bfH(\bfX_i; \hat S)^\top \bfeta^0$ onto the linear span of $\bfX_i$, plus the residual approximation error.

\section{Implementation} \label{sec:implement}
The proposed estimator in (\ref{genest}) can be implemented using standard software for penalized regression once the design-assisted quantities $H(\bfX_i;\hat S)$ and $K(\hat S)$ have been constructed. Let $\bfW_i(\hat S) = \left( \bfX_i^\top,\, H(\bfX_i; \hat S)^\top \right)^\top \in \reals^{p+q}$ and denote $\bfW(\hat S)$ be the corresponding design matrix. Then (\ref{genest}) can be viewed as a sparse regularized estimator on the augmented design $\bfW(\hat S)$, with an additional quadratic penalty on the structural component $\bfbeta$. In LM or GLM, this allows the use of existing coordinate-descent or proximal-gradient algorithms for penalized $M$-estimation \citep{friedman2010regularization,yao2018efficient}.

\subsection{Choice of the design-assisted feature map}
The construction of $H(\bfX;\hat S)$ is a nuisance-modeling step and should be guided by the scientific structure of the application. When the purpose is to exploit design information alone, screening used to construct $H$ should be outcome-independent; its threshold controls the size of the nuisance dictionary rather than serving as a significance level for inference. If the resulting dictionary is very large or highly redundant, low-dimensional summaries such as principal-component, factor, or sparse-basis representations can be used, or stronger regularization can be imposed on $\bfeta$. Conditions (A2) and (A5) provide the theoretical criteria for this choice: $H$ should predict the relevant latent component well while preserving sufficient local identification of the augmented regression.

\subsection{Choice of tuning parameters}
The theoretical results provided in Section \ref{sec:thm} show that both the structural penalty on $\bfbeta$ and the nuisance penalty on $\bfeta$ are taken at the usual high-dimensional order. In practice, these parameters are usually selected from the data. Specifically, given a set of candidate values, we select the triplet that minimizes either the cross-validation error, an information criterion, or a prediction-risk criterion if a validation sample is available. This approach is the most flexible and can be implemented directly through existing packages such as \cite{liu2018data}. 

For computational simplicity, one may reduce tuning to a one-dimensional search by fixing the ratios between the $\lambda$s. A convenient choice is to let $\lambda_1 / \lambda_2$ be a fixed constant and let $\lambda_3 = \sqrt{\log(q) / \log(p)} \lambda_1$. The estimator can then be computed along a one-dimensional regularization path. Although this approach is less flexible than a full three-dimensional search, it is computationally faster and is often adequate for routine applications.

\subsection{Iterative algorithm} \label{subsec:ite}
Theorem \ref{thm1} requires the approximation error $\bar \delta_n(\hat S)$ to be small, while Corollary \ref{cor1}(II) is easiest to apply when the effective distortion $\bfDelta^\star$ has support of order $O(s_n)$. In practice, however, a high-dimensional $H(\bfX_i; \hat S)$ may substantially reduce $\bar \delta_n(\hat S)$ without yielding a sparse one-step approximation. This nonsparsity issue is most common when the dependence between $U_i$ and $\bfX_i$ is driven by spurious correlations arising from high dimensionality. Iterative procedures are therefore natural: by repeatedly updating the nuisance approximation $H(\bfX; \hat S)^\top \hat\bfeta$ and refitting the structural component, the final-stage distortion vector will achieve an effective support of order $O(s_n)$. 

Specifically, let $\hat \bftheta^{(0)}$ denote an initial estimator obtained from the proposed method (\ref{genest}). At iteration $k \ge 1$, we augment the feature mapping $H(\bfX_i; \hat S)$ with $\hat U_i^{(k)}$, denoted as $H(\bfX_i; \hat S, \hat U_i^{(k)})$, where the nuisance approximation $\hat U_i^{(k)}$ is defined as $H(\bfX; \hat S, \hat U_i^{(k-1)})^\top \hat\bfeta^{(k-1)}$. The updated estimator $\hat \bftheta^{(k)}$ is obtained by refitting (\ref{genest}) using the augmented feature $H(\bfX_i; \hat S, \hat U_i^{(k)})$. The algorithm stops when $n^{-1} \sum_{i=1}^n \left( \hat U_i^{(k)} - \hat U_i^{(k-1)} \right)^2$ is below a pre-specified value $e_{\mathrm{thr}}$.

\section{Simulation studies} \label{sec:sim}
The role of quadratic shrinkage in improving estimation efficiency has been extensively studied in the ridge and generalized-ridge literature \citep{hoerl1970ridge,tikhonov1977illposed,negahban2012unified}. Since that aspect of the proposed framework is already well understood, our simulations focus primarily on the bias reduction afforded by the design-assisted augmentation term $H(\bfX_i;\hat S)^\top\bfeta$, especially in settings where latent effects induce target distortion for sparse estimators that ignore design information. Therefore, we set $\lambda_2 = 0$ in all settings.

\subsection{Clustered data analysis} \label{subsec:sim1}
We consider the linear model (\ref{lm}) with the clustered data setting specified in (\ref{clustered}). We consider the same simulation settings as those used by \cite{ye2026synthetic}. Let $m = 400$, $n_i = 4$ for all $i = 1, \ldots, m$, and $p = 1000$ covariates, we generate $\bfX_{ij}$ from a multivariate Gaussian distribution with mean vector $\bfmu_i$ and precision matrix $\bfTheta$. We consider two settings of $\bfmu_i$. The first setting is denoted as \textit{Independent}, where $\bfmu_i$'s are heterogeneous but are not structurally related to $\alpha_i$. The second setting is denoted as \textit{Endogenous}, where $\bfmu_i$ and $\alpha_i$ are driven by the same hidden source. The detailed data-generating process, as well as the implementation details, are included in the supplementary material.

We compare four estimators: (i) the marginal-model LASSO estimator that ignores design information, denoted by $\tilde{\bfbeta}$; (ii) the proposed one-step design-assisted estimator with LASSO penalty, denoted by $\hat{\bfbeta}$; (iii) the iterative design-assisted estimator using $\lambda_{\min}$ for the nuisance update, denoted by $\hat{\bfbeta}^{I, 1}$; and (iv) the iterative design-assisted estimator using $\lambda_{\mathrm{1se}}$ for the nuisance update, denoted by $\hat{\bfbeta}^{I, 2}$.

We evaluate the estimators using the number of true positives (TP), the number of false positives (FP), the root mean squared error (RMSE) of the residuals, and $\|\hat{\bfbeta}-\bfbeta^0\|_1$, based on 100 simulated data sets. Here TP and FP assess variable selection performance, RMSE reflects prediction accuracy, and the $\ell_1$ error measures structural estimation accuracy. The results are reported in Table \ref{table_sim1}.

Across all settings, all estimators can identify all nonzero coefficients, yielding TP $= 6$. As is typical for LASSO-type procedures, $\lambda_{\mathrm{1se}}$ yields fewer FP and smaller $\ell_1$ error, whereas $\lambda_{\min}$ tends to achieve slightly smaller RMSE. In the \textit{Independent} setting, the improvements from the proposed design-assisted procedures are driven by spurious correlation. This is already visible when $p_0=0$, where there is no explicit design-induced association between the heterogeneous covariate distribution and the latent effect. Even in this case, the proposed estimators improve variable selection over the benchmark estimator $\tilde \bfbeta$, indicating that the design-assisted adjustment can remove incidental finite-sample alignment between irrelevant covariates and the latent effect.

As $p_0$ increases in the \textit{Independent} setting, the spurious-correlation mechanism becomes more severe because more covariates have heterogeneous cluster-level distributions and hence more opportunities arise for irrelevant variables to act as latent proxies. Accordingly, the advantage of incorporating design information becomes much more pronounced. At $p_0 = 500$ and $\lambda_{\mathrm{1se}}$, FP is reduced from $28.14$ for $\tilde{\bfbeta}$ to $22.65$ for $\hat{\bfbeta}$, and further to $3.92$ and $11.72$ for the iterative estimators, with a similar pattern for the $\ell_1$ error. This is consistent with the theory that repeated nuisance updates are particularly helpful when the latent distortion is non-sparse and not well captured by a single sparse approximation.

By contrast, in the \textit{Endogenous} setting, the improvement is driven by genuine dependence between the design heterogeneity and the latent effect. The one-step design-assisted estimator already captures most of the gain: at $p_0 = 500$ and $\lambda_{\mathrm{1se}}$, FP drops sharply from $68.51$ for $\tilde{\bfbeta}$ to $7.38$ for $\hat{\bfbeta}$, with only a more modest additional reduction under the iterative procedures. 

\subsection{Causal inference} \label{subsec:sim2}
We consider the causal inference setting introduced in Section \ref{subsec:exp}, with a binary treatment $A_i$, outcome $Y_i$, observed covariates $\bfX_i$, and an unobserved confounder $U_i$. Write $\bfX_i = (\bfX_{i,0}^\top, \bfZ_i^\top, \bfW_i^\top)^\top$, where $\bfX_{i,0}$ are ordinary baseline covariates, and $\bfZ_i$ and $\bfW_i$ denote treatment-inducing and outcome inducing proxy variables, respectively. The coefficient of $A_i$ is left unpenalized in all penalized regressions, and the corresponding estimate is used as the treatment-effect estimator.

We set $n = 800$ and $p = 1000$, with $\dim(\bfX_{i,0}) = 980$ and $\dim(\bfZ_i) = \dim(\bfW_i) = 10$. We generate $ U_i \sim N(0, ~1)$. Unless otherwise specified, the proxy variables are generated by
\ba
\bfZ_i = \bfGamma_Z U_i + \bfB_Z^\top \bfX_{i,0} + \bfnu_{Zi},
\qquad
\bfW_i = \bfGamma_W U_i + \bfB_W^\top \bfX_{i,0} + \bfnu_{Wi},
\ea
where $\bfnu_{Zi}$ and $\bfnu_{Wi}$ are multivariate Gaussian noise vectors with mean $\bfzero$ and covariance $0.25\,\bfI$, independently of $(\bfX_{i,0},U_i)$, $\bfB_Z = (B_{Z,lk}) \in \reals^{p_{X^0} \times q_Z}$ and $\bfB_W = (B_{W,lk}) \in \reals^{p_{X^0} \times q_W}$, with $B_{Z,lk}, B_{W,lk} \stackrel{\mathrm{iid}}{\sim} 0.8\,\delta_0 + 0.2\, N(0, ~0.1)$, and $\delta_0$ denotes a point mass at zero.

The treatment is generated from
\ba
A_i \mid (\bfX_{i,0}, \bfZ_i, U_i) \sim \mathrm{Bernoulli}\! \left[ \mathrm{expit}\! \left\{ \bfX_{i,0}^\top \bfalpha^0 + \gamma_A U_i + \delta_Z \bar Z_i \right\} \right],
\ea
and the outcome is generated from
\ba
Y_i = \tau^0 A_i + \bfX_{i,0}^\top \bfbeta^0 + \gamma_Y U_i + \delta_W \bar W_i + \vareps_i,
\qquad
\vareps_i \sim N(0, ~1),
\ea
where $\tau^0 = 1$ is the treatment effect of interest and $\bar Z_i$ and $\bar W_i$ are the sample-standardized averages of $\bfZ_i$ and $\bfW_i$, respectively. 

We consider five proxy settings. The first three settings, denoted as ``Weak proxy'', ``strong proxy'', and ``Misspecified proxy'', are designed to evaluate proxy informativeness and misspecification when $\bfZ_i$ and $\bfW_i$ are primarily noisy measurements of the latent confounder. In these settings, $\delta_Z = \delta_W = 0$, so the direct paths $\bfZ \to A$ and $\bfW \to Y$ are omitted. This corresponds to a simplified version of the proximal proxy DAG in which both proxy blocks are $U$-relevant, but the treatment-inducing and outcome-inducing roles are not enforced through direct effects. The last two settings, ``Treatment-inducing proxy'' and ``Outcome-proxy'', are designed to more closely follow the proximal negative-control structure. 

We compare two marginal-model LASSO estimators and several design-assisted estimators. The first marginal estimator uses only the baseline covariates $\bfX_{0}$ and is denoted by $\tilde{\bfbeta}^{1}$. The second uses the full observed covariate vector $(\bfX_{0}^\top, \bfZ^\top, \bfW^\top)^\top$ and is denoted by $\tilde{\bfbeta}^{2}$. 
For the design-assisted estimators, we consider two constructions of the nuisance feature map. The first construction uses the same three design-assisted estimators as in Section \ref{subsec:sim1}, while the second is motivated by proximal causal learning \citep{tchetgen2024introduction}. 

The classical proximal two-stage least-squares estimator \citep{tchetgen2024introduction} is not directly applicable in our high-dimensional setting because the number of baseline covariates is comparable to or larger than the sample size. We therefore use this LASSO-based proximal g-computation estimator as a high-dimensional benchmark that preserves the key proximal-control construction while regularizing both the first-stage proxy regression and the second-stage outcome bridge regression. We report two estimators. The first, denoted by $\hat \tau_{\mathrm{PG}}^1$, is the coefficient of $A_i$ from the second-stage LASSO fit. The second, denoted by $\hat\tau_{\mathrm{PG}}^2$, is a post-LASSO version motivated by \cite{belloni2014inference}, which is obtained by refitting ordinary least squares on $A_i$ and the nuisance variables selected in the second stage. More details are available in the supplementary material.

We evaluate the estimators using the empirical bias and RMSE for the causal parameter $\tau^0$ based on 100 simulated data sets. As expected, the $\lambda_{\mathrm{1se}}$ rule yields slightly larger bias and RMSE than $\lambda_{\min}$, so we focus mainly on the relative performance within each tuning rule. The marginal estimator $\tilde{\bfbeta}^{2}$, which includes the proxy variables directly in the working regression, improves over $\tilde{\bfbeta}^{1}$, which uses only the baseline covariates $\bfX_0$. The improvement is more pronounced in the treatment-inducing and outcome-proxy settings, where the bias of $\tilde{\bfbeta}^{1}$ is substantially larger because the proxy variables directly contribute to treatment assignment or outcome variation. In the outcome-proxy setting, the bias decreases from $0.980$ for $\tilde{\bfbeta}^{1}$ to $0.260$ for $\tilde{\bfbeta}^{2}$. This pattern is consistent with Corollary \ref{cor3}: even when the marginal working model is biased for structural estimation, including proxy variables can provide an automatic synthetic approximation to the latent confounding component and improve prediction-oriented performance.

Our proposed design-assisted estimators further reduce bias relative to the full marginal benchmark $\tilde{\bfbeta}^{2}$ in all five settings. 
Thus, explicitly incorporating the proxy-based nuisance component through $H(\bfX_i; \hat S)$ provides additional bias reduction beyond simply adding the proxy variables to the marginal model. 

The iterative estimators provide further improvement over the one-step design-assisted estimator, especially in the treatment-inducing and outcome-proxy settings. This is because these settings contain more explicit yet complex proxy-driven nuisance structure, so iterative updates can refine the one-step linear approximation of the latent confounding component. 
In the weak- and strong-proxy settings, the improvement is more modest, suggesting that the one-step adjustment already captures much of the low-dimensional proxy information.

The proximal-control design-assisted estimator $\hat{\bfbeta}^{*}$ gives mixed results. It improves over $\tilde{\bfbeta}^{2}$ in the strong- and misspecified-proxy settings, but is not uniformly superior to the raw-proxy design-assisted estimators. This is reasonable because $\hat{\bfbeta}^{*}$ uses only the fitted proximal-control component $\widehat{\bfW}_{c,i}$ together with low-dimensional summaries, whereas the raw-proxy construction keeps the full dictionary $(\bfZ_i,\bfW_i)$ available for nuisance approximation. When the fitted bridge component is imperfect, replacing the raw proxies with $\widehat{\bfW}_{c,i}$ can discard useful residual proxy information.

The LASSO-based proximal g-computation benchmarks provide a useful comparison. The post-LASSO version $\hat \tau_{\mathrm{PG}}^2$ improves on the original LASSO coefficient $\hat \tau_{\mathrm{PG}}^1$ in all settings, reflecting the reduction of shrinkage bias from least-squares refitting after variable selection. The proximal g-computation estimators perform particularly well in the strong-proxy setting, where $\hat \tau_{\mathrm{PG}}^2$ has bias $0.092$ and RMSE $0.130$. This is expected because both $\bfZ_i$ and $\bfW_i$ are strong approximately linear measurements of the same latent confounder, so $\widehat{\bfW}_{c,i}$ is an effective proxy-control representation. In contrast, the treatment-inducing and outcome-proxy settings introduce additional direct proxy paths $\bfZ \to A$ and $\bfW \to Y$, which make the data-generating mechanism closer to the qualitative proximal DAG but also increase the difficulty of the finite-sample estimation problem. The direct $\bfZ \to A$ path strengthens treatment selection, while the direct $\bfW \to Y$ path introduces outcome variation that may not be fully captured by the fitted component $\widehat{\bfW}_{c,i}$. Overall, the results show that the raw design-assisted estimators are robust across all proxy settings, while the proximal-control and proximal g-computation benchmarks are most effective when the fitted bridge approximation is strong.

\section{Real data examples} \label{sec:real}
\subsection{Clustered data example} \label{subsec:real1}
We first illustrate the proposed method using a longitudinal bulk RNA-seq data set of enriched blood neutrophils from hospitalized COVID-19 patients \citep{lasalle2022longitudinal}. The data set contains 629 blood samples from 306 patients, collected at up to three time points (day 0, day 3, and day 7). Neutrophils were enriched from whole blood for bulk RNA sequencing, and gene expression was quantified as $\log_2(\mathrm{TPM}+1)$, where TPM (transcripts per million) is a normalized measure of expression abundance. Among 5{,}000 candidate genes, we retain 3{,}481 genes with a two-sample $t$-test $p$-value below $0.05$, together with age and BMI, as candidate covariates. The outcome variable is the indicator of maximum disease severity within 28 days.

We consider the GLM model with $\xi = \mathrm{logit}[\E(Y)]$ under the clustered data setting introduced in Section \ref{subsec:exp}. We consider the same four estimators used in Section \ref{subsec:sim1}: the marginal-model estimator $\tilde{\bfbeta}$, the one-step design-assisted estimator $\hat{\bfbeta}$, and the two iterative estimators $\hat{\bfbeta}^{I,1}$ and $\hat{\bfbeta}^{I,2}$. We only consider $\lambda_{\mathrm{1se}}$, since it demonstrates better performance on variable selection in our simulation studies. For the three design-assisted estimators, we construct $H(\bfX; \hat S)$ by first carrying out an ANOVA-type patient-level screening based on the likelihood ratio test for each covariate, and then including the corresponding cluster means $\bar X_{l, i}$ for all covariates with $p$-value below $0.05$. This results in 2{,}397 cluster-level nuisance features.

The marginal-model estimator $\tilde \bfbeta$ selects 75 genes, whereas the one-step design-assisted estimator $\hat \bfbeta$ selects 37 genes. The two iterative estimators yield the same 35 selected genes, largely overlapping with the one-step design-assisted fit. The substantially larger model selected by $\tilde{\bfbeta}$ is consistent with the target-distortion phenomenon described in Corollary \ref{cor1}: when patient-level latent effects are ignored, irrelevant genes may enter the model as proxies for within-patient heterogeneity. By contrast, the design-assisted estimators produce a more parsimonious set of genes after adjusting for the component of the latent effect associated with the covariate distribution. Most of the 37 genes selected by $\hat{\bfbeta}$ are either core findings in \citet{lasalle2022longitudinal} or belong to biologically coherent pathways. Biological categories of the 37 selected genes are summarized in the supplementary material.

\subsection{Causal inference example} \label{subsec:real2}
We next reanalyze the right-heart catheterization (RHC) data from the Study to Understand Prognoses and Preferences for Outcomes and Risks of Treatments (SUPPORT) \citep{connors1996effectiveness}, following the proximal causal analysis of \citet{tchetgen2024introduction}. The scientific question is to evaluate the effect of RHC during the initial ICU care on survival time up to 30 days. The treatment variable $A_i$ indicates whether RHC was performed within the initial 24 hours of ICU stay, and the outcome $Y_i$ is the number of days between admission and death or censoring at 30 days.

After preprocessing, the analysis includes 5{,}735 patients and 42 baseline covariates. We use the ten physiological measurements collected during the first 24 hours in the ICU as candidate proxy variables: serum sodium, serum potassium, serum creatinine, bilirubin, albumin, PaO$_2$/FiO$_2$ ratio, PaCO$_2$, arterial pH, white blood cell count, and hematocrit. Following the proxy-allocation idea in \citet{tchetgen2024introduction}, we rank these candidate proxies by their adjusted associations with treatment and outcome, and then allocate them greedily into treatment-inducing proxies $\bfZ_i$ and outcome-inducing proxies $\bfW_i$. This procedure selects
\ba
\bfZ =(\texttt{pafi1}, \texttt{pot1}, \texttt{hema1}, \texttt{crea1}, \texttt{alb1}), ~ \bfW = (\texttt{bili1}, \texttt{paco21}, \texttt{ph1}, \texttt{sod1}, \texttt{wblc1}).
\ea

We compare ordinary least squares, proximal two-stage least squares (P2SLS), the two marginal LASSO estimators from the simulation study, and the proposed design-assisted estimators. The estimator $\tilde{\bfbeta}^{1}$ uses only the baseline covariates, whereas $\tilde{\bfbeta}^{2}$ additionally includes the proxy variables. The design-assisted estimators $\hat{\bfbeta}$, $\hat{\bfbeta}^{I,1}$, and $\hat{\bfbeta}^{I,2}$ use the raw proxy-based construction of $H(\bfX_i; \hat S)$. We also report the proximal-control version $\hat{\bfbeta}^{*}$, where $H(\bfX_i; \hat S)$ is constructed from $\widehat{\bfW}_{c,i} = \widehat{\E}(\bfW_i \mid A_i, \bfX_i, \bfZ_i)$, in analogy with the proximal-control variable used in P2SLS.

The ordinary least-squares estimate is $-0.486$, whereas the P2SLS estimate is $-1.066$. Thus, as in the original proximal analysis, the bridge-based estimator moves the estimated effect in a more negative direction than ordinary regression adjustment. The two marginal LASSO estimators give similar estimates, with $\tilde{\bfbeta}^{1} = -0.662$ and $\tilde{\bfbeta}^{2} = -0.675$. This suggests that directly adding the proxy variables to the marginal working model does not substantially change the estimated treatment effect after adjustment for the baseline covariates.

The raw design-assisted estimators give estimates close to the ordinary least-squares estimate: $\hat{\bfbeta} = -0.500, ~ \hat{\bfbeta}^{I,1} = -0.499, ~ \hat{\bfbeta}^{I,2} = -0.513$.
This behavior is consistent with the simulation results, where the raw proxy construction was relatively stable across settings but did not necessarily mimic the proximal bridge estimator. In the SUPPORT data, the raw proxy dictionary appears to provide a nuisance adjustment that is largely aligned with the ordinary covariate-adjusted regression. In contrast, the proximal-control design-assisted estimator gives $\hat{\bfbeta}^{*} = -0.956$. The estimator $\hat{\bfbeta}^{*}$ borrows the proximal-control idea by constructing fitted proxies $\widehat{\bfW}_c$, and therefore moves toward the P2SLS estimate.

\section{Discussion}
Motivated by regression settings with random design, high dimensionality, and latent effects, we propose the design-assisted framework in which estimation depends not only on the conditional model $Y \mid \bfX$, but also on structured features of $\bfX$ distribution. The framework unifies several existing methods as special cases and provides a general perspective on how design information may be used to improve estimation. In comparison with a benchmark sparse procedure that ignores such information, the proposed framework highlights two distinct consequences of doing so: loss of efficiency through weakly identified design directions, and target distortion through unadjusted latent-effect components associated with $\mathbf{X}$.

{\it High-dimensional nuisance learning}.
As the dimension of $\bfX$ increases, the role of $H(S)$ becomes more important. On one hand, a larger covariate space creates a richer collection of candidate summaries and proxy variables for latent effects, thereby enlarging the class of admissible nuisance representations. This makes it more plausible that $U_i$ can be approximated well by a linear combination of observed features, so that the approximation error $\bar \delta_n(\hat S)$ becomes small \citep{mulayoff2019minimal,miao2018proxy,hansen2019factor}. On the other hand, the same increase in dimensionality also makes it more likely that some observed variables align with the latent effect, even if only through weak or partly spurious correlation. As a result, the target distortion $\bfDelta^0 = \bfbeta^\star - \bfbeta^0$ will become larger. 

This dual role suggests that $H(\bfX; \hat S)^\top \bfeta$ should be viewed as a nuisance learner rather than as an interpretable low-dimensional parameter. Under this viewpoint, the goal is not about recovery of $\bfeta^0$, but to accurately predict the component of $U_i$ that distorts the $\bfbeta$-score. The iterative algorithm proposed in Section \ref{subsec:ite} is a natural extension. Further, extending the present framework to allow nonparametric nuisance representations is an important direction for future work \citep{chernozhukov2018dml,chakrabortty2018efficient}. Whether such nonparametric or machine-learning-based nuisance learners preserve the oracle properties in Theorem \ref{thm1} remains an open theoretical question. 

{\it Constructing $H(S)$ without latent effect resources}.
When no plausible latent-effect resources are available, satisfying (A2) becomes fundamentally challenging. The problem is no longer only one of regularized prediction but also one of identification, because the latent component associated with $\bfX_i$ cannot generally be recovered from observational data alone without additional structure. One possibility arises when a subset of the samples is drawn under an experimental design. In that case, instead of approximating $U_i$ directly, one can use the experimental subset to construct weights or balancing schemes that remove the contribution of $U_i$ from the target estimating equation \citep{colnet2024causal}.

{\it Standard error estimation}. 
The primary goal of this paper is to clarify how the marginal distribution of the covariates affects structural estimation and variable selection through design-induced nuisance structure. We therefore leave the standard error estimation for future work. When the contribution of the residual approximation error to the debiased score is asymptotically negligible at the $n^{-1/2}$ scale, the standard debiased LASSO procedure can be adapted to the augmented design \citep{van2014asymptotically}. In contrast, when this contribution is non-negligible, conventional debiasing does not in general remove the remaining target bias, and an additional correction for the residual latent component is required. For clustered data, such standard error estimation is developed in \citet{ye2026synthetic}. A generic inferential theory for our design-assisted $M$-estimation framework is left for future work.

\begin{table}[!h]
\centering
\caption{Simulation results under the clustered data settings.} \label{table_sim1}
\centering
\resizebox{\ifdim\width>\linewidth\linewidth\else\width\fi}{!}{
\fontsize{7}{9}\selectfont
\begin{tabular}[t]{ccccccccccccccc}
\toprule
\multicolumn{3}{c}{ } & \multicolumn{12}{c}{$p_0$} \\
\cmidrule(l{3pt}r{3pt}){4-15}
\multicolumn{3}{c}{ } & \multicolumn{3}{c}{TP} & \multicolumn{3}{c}{FP} & \multicolumn{3}{c}{RMSE} & \multicolumn{3}{c}{$\ell_1$ error} \\
\cmidrule(l{3pt}r{3pt}){4-6} \cmidrule(l{3pt}r{3pt}){7-9} \cmidrule(l{3pt}r{3pt}){10-12} \cmidrule(l{3pt}r{3pt}){13-15}
Setting & Estimator & Tuning & $0$ & $200$ & $500$ & $0$ & $200$ & $500$ & $0$ & $200$ & $500$ & $0$ & $200$ & $500$\\
\midrule
 & $\tilde{\boldsymbol\beta}$ & $\lambda_{\min}$ & 6.000 & 6.000 & 6.000 & 35.850 & 74.210 & 143.040 & 1.482 & 1.432 & 1.349 & 0.883 & 1.326 & 2.213\\
 & & $\lambda_{\mathrm{1se}}$ & 6.000 & 6.000 & 6.000 & 1.940 & 7.730 & 28.140 & 1.536 & 1.518 & 1.479 & 0.734 & 0.678 & 0.786\\
\cmidrule{2-15}
 & $\hat{\boldsymbol\beta}$ & $\lambda_{\min}$ & 6.000 & 6.000 & 6.000 & 32.710 & 63.160 & 114.900 & 1.425 & 1.356 & 1.258 & 0.835 & 1.159 & 1.802\\
 & & $\lambda_{\mathrm{1se}}$ & 6.000 & 6.000 & 6.000 & 1.910 & 5.920 & 22.650 & 1.481 & 1.442 & 1.378 & 0.727 & 0.660 & 0.742\\
\cmidrule{2-15}
 & $\hat{\boldsymbol\beta}^{I,1}$ & $\lambda_{\min}$ & 6.000 & 6.000 & 6.000 & 20.370 & 31.730 & 48.050 & 1.438 & 1.390 & 1.329 & 0.733 & 0.820 & 1.016\\
 & & $\lambda_{\mathrm{1se}}$ & 6.000 & 6.000 & 6.000 & 0.780 & 1.350 & 3.920 & 1.490 & 1.456 & 1.408 & 0.821 & 0.765 & 0.729\\
\cmidrule{2-15}
 & $\hat{\boldsymbol\beta}^{I,2}$ & $\lambda_{\min}$ & 6.000 & 6.000 & 6.000 & 19.490 & 35.400 & 71.120 & 1.437 & 1.380 & 1.292 & 0.723 & 0.856 & 1.275\\
\multirow{-8}{*}{\centering\arraybackslash Independent} & & $\lambda_{\mathrm{1se}}$ & 6.000 & 6.000 & 6.000 & 0.880 & 3.310 & 11.720 & 1.482 & 1.439 & 1.375 & 0.783 & 0.712 & 0.702\\
\cmidrule{1-15}
 & $\tilde{\boldsymbol\beta}$ & $\lambda_{\min}$ & 6.000 & 6.000 & 6.000 & 32.700 & 95.250 & 108.700 & 1.341 & 0.987 & 0.974 & 0.749 & 1.503 & 1.450\\
 & & $\lambda_{\mathrm{1se}}$ & 6.000 & 6.000 & 6.000 & 1.900 & 50.710 & 68.510 & 1.388 & 1.032 & 1.015 & 0.663 & 1.249 & 1.243\\
\cmidrule{2-15}
 & $\hat{\boldsymbol\beta}$ & $\lambda_{\min}$ & 6.000 & 6.000 & 6.000 & 31.590 & 54.310 & 48.800 & 1.296 & 0.989 & 0.985 & 0.727 & 0.810 & 0.728\\
 & & $\lambda_{\mathrm{1se}}$ & 6.000 & 6.000 & 6.000 & 1.220 & 12.070 & 7.380 & 1.347 & 1.038 & 1.033 & 0.662 & 0.600 & 0.517\\
\cmidrule{2-15}
 & $\hat{\boldsymbol\beta}^{I,1}$ & $\lambda_{\min}$ & 6.000 & 6.000 & 6.000 & 18.530 & 42.890 & 34.880 & 1.309 & 1.000 & 0.998 & 0.639 & 0.711 & 0.615\\
 & & $\lambda_{\mathrm{1se}}$ & 6.000 & 6.000 & 6.000 & 0.440 & 8.070 & 3.820 & 1.355 & 1.048 & 1.043 & 0.751 & 0.602 & 0.548\\
\cmidrule{2-15}
 & $\hat{\boldsymbol\beta}^{I,2}$ & $\lambda_{\min}$ & 6.000 & 6.000 & 6.000 & 19.660 & 38.560 & 35.460 & 1.305 & 1.001 & 0.996 & 0.639 & 0.690 & 0.628\\
\multirow{-8}{*}{\centering\arraybackslash Endogenous} & & $\lambda_{\mathrm{1se}}$ & 6.000 & 6.000 & 6.000 & 0.650 & 9.570 & 6.670 & 1.347 & 1.041 & 1.034 & 0.710 & 0.611 & 0.546\\
\bottomrule
\end{tabular}}
\end{table}

\begin{table}[!h]
\centering
\caption{\label{table_sim2}Simulation results under the causal inference settings.}
\centering
\resizebox{\ifdim\width>\linewidth\linewidth\else\width\fi}{!}{
\fontsize{7}{9}\selectfont
\begin{tabular}[t]{cccccccccccc}
\toprule
\multicolumn{2}{c}{ } & \multicolumn{10}{c}{Setting} \\
\cmidrule(l{3pt}r{3pt}){3-12}
\multicolumn{2}{c}{ } & \multicolumn{5}{c}{Bias} & \multicolumn{5}{c}{RMSE} \\
\cmidrule(l{3pt}r{3pt}){3-7} \cmidrule(l{3pt}r{3pt}){8-12}
Estimator & Tuning & Weak & Strong & Miss & Treat & Outcome & Weak & Strong & Miss & Treat & Outcome\\
\midrule
 & $\lambda_{\min}$ & 0.261 & 0.261 & 0.261 & 0.469 & 0.980 & 0.275 & 0.275 & 0.275 & 0.476 & 0.987\\

\multirow{-2}{*}{\centering\arraybackslash $\tilde{\boldsymbol\beta}^{1}$} & $\lambda_{\mathrm{1se}}$ & 0.284 & 0.284 & 0.284 & 0.486 & 0.995 & 0.297 & 0.297 & 0.297 & 0.494 & 1.003\\
\cmidrule{1-12}
 & $\lambda_{\min}$ & 0.234 & 0.157 & 0.243 & 0.258 & 0.260 & 0.248 & 0.176 & 0.257 & 0.272 & 0.274\\

\multirow{-2}{*}{\centering\arraybackslash $\tilde{\boldsymbol\beta}^{2}$} & $\lambda_{\mathrm{1se}}$ & 0.269 & 0.204 & 0.275 & 0.347 & 0.347 & 0.282 & 0.218 & 0.289 & 0.359 & 0.360\\
\cmidrule{1-12}
 & $\lambda_{\min}$ & 0.189 & 0.105 & 0.208 & 0.127 & 0.128 & 0.206 & 0.132 & 0.225 & 0.151 & 0.152\\

\multirow{-2}{*}{\centering\arraybackslash $\hat{\boldsymbol\beta}$} & $\lambda_{\mathrm{1se}}$ & 0.220 & 0.136 & 0.238 & 0.169 & 0.170 & 0.235 & 0.158 & 0.253 & 0.190 & 0.191\\
\cmidrule{1-12}
 & $\lambda_{\min}$ & 0.189 & 0.104 & 0.207 & 0.121 & 0.124 & 0.206 & 0.132 & 0.225 & 0.148 & 0.150\\

\multirow{-2}{*}{\centering\arraybackslash $\hat{\boldsymbol\beta}^{I,1}$} & $\lambda_{\mathrm{1se}}$ & 0.216 & 0.132 & 0.235 & 0.150 & 0.155 & 0.232 & 0.156 & 0.251 & 0.174 & 0.178\\
\cmidrule{1-12}
 & $\lambda_{\min}$ & 0.185 & 0.100 & 0.204 & 0.114 & 0.117 & 0.202 & 0.129 & 0.222 & 0.141 & 0.144\\

\multirow{-2}{*}{\centering\arraybackslash $\hat{\boldsymbol\beta}^{I,2}$} & $\lambda_{\mathrm{1se}}$ & 0.211 & 0.128 & 0.231 & 0.141 & 0.144 & 0.227 & 0.152 & 0.247 & 0.166 & 0.168\\
\cmidrule{1-12}
 & $\lambda_{\min}$ & 0.246 & 0.143 & 0.225 & 0.274 & 0.259 & 0.260 & 0.165 & 0.240 & 0.286 & 0.271\\

\multirow{-2}{*}{\centering\arraybackslash $\hat{\boldsymbol\beta}^{*}$} & $\lambda_{\mathrm{1se}}$ & 0.271 & 0.173 & 0.251 & 0.306 & 0.287 & 0.285 & 0.192 & 0.266 & 0.317 & 0.299\\
\cmidrule{1-12}
$\hat{\boldsymbol\tau}_{\mathrm{PG}}^1$ & -- & 0.242 & 0.135 & 0.220 & 0.254 & 0.206 & 0.256 & 0.158 & 0.236 & 0.267 & 0.221\\

$\hat{\boldsymbol\tau}_{\mathrm{PG}}^2$ & -- & 0.208 & 0.092 & 0.185 & 0.202 & 0.153 & 0.228 & 0.130 & 0.208 & 0.220 & 0.178\\
\bottomrule
\end{tabular}}
\end{table}

\clearpage
\newpage

\renewcommand{\theequation}{S\arabic{equation}}
\setcounter{equation}{0}

\renewcommand{\thefigure}{S\arabic{figure}}
\setcounter{figure}{0}

\renewcommand{\thetable}{S\arabic{table}}
\setcounter{table}{0}

\appendix
\begin{center}
{\bf\Large Supplementary Material} 
\end{center}

\section{Additional special examples}
{\it Structured latent heterogeneity.}
The setting arises when heterogeneity is indexed by latent subpopulations, hidden regimes, or shifted environments. We assume
\ba
Y_i = m(\bfX_i; \bfbeta^0, U_i) + \vareps_i,
\qquad
U_i = g(\bfX_i, \xi_i),
\qquad
\E(\vareps_i \mid \bfX_i, U_i) = 0,
\ea
where $\xi_i$ indexes latent regimes or sample-specific environments. Unlike the clustered case, the relevant structure is not tied to known group membership and must instead be learned from the covariate distribution itself. A natural choice is to let $H(\bfX_i; \hat S) = \bfG_i$, where $\bfG_i$ contains additional summaries such as estimated subgroup indicators, distances to empirical subgroup centers, principal component scores, or factor scores. For example, if $\hat{\bfc}_1, \ldots, \hat{\bfc}_G$ denote empirical subgroup centers obtained from the covariates, one may take
\ba
\bfG_i = \left( \|\bfX_i - \hat{\bfc}_1 \|_2, \ldots, \|\bfX_i - \hat{\bfc}_G \|_2 \right)^\top.
\ea
In this way, $H(\bfX_i; \hat S)^\top \bfeta^0$ approximates the component of the latent effect that varies across hidden regimes. In such settings, $K(\hat S)$ may also be chosen from the empirical covariance structure, for example, through pooled or subgroup-specific covariance estimates, to regularize weakly identified directions induced by poor overlap or changing covariate distributions. 

{\it Missing data and selection effects.}
The framework also applies to regression with incomplete outcomes or covariates, where the latent effect represents an unmodeled selection, response, or dropout mechanism. Let $R_i \in \{0, 1\}$ denote the missing data indicator. Suppose that the outcome model (1) and the observation indicator satisfy
\ba
\p(R_i = 1 \mid \bfX_i, U_i) = \pi(\bfX_i, U_i),
\ea
where $U_i$ drives informative nonresponse or dropout. Thus, even after conditioning on $\bfX_i$, the response mechanism may still depend on latent structure through $U_i$.

If the latent effect is associated with covariate summaries, one may construct $H(\bfX_i; \hat S) = \left(\psi_1 (\bfX_i), \ldots, \psi_q (\bfX_i) \right)^\top$, where $\psi_1, \ldots, \psi_q$ are prespecified or data-adaptive transformations of $\bfX_i$, such as spline basis functions, interaction terms, principal component scores, factor scores, or other low-dimensional summaries predictive of missingness. In this way, $H(\bfX_i;\hat S)^\top\bfeta$ is intended to approximate the component of the latent selection effect that is associated with $\bfX_i$. The resulting design-assisted term plays a role analogous to augmentation in semiparametric missing-data estimation, where nuisance structure is modeled only to protect the target parameter \citep{tsiatis2006semiparametric}.

{\it Survival analysis with frailty or informative censoring.}
Let $Y_i = (T_i, \Delta_i)$ denote the observed event time and censoring indicator, where $T_i = \min(T_i^\ast, C_i)$ and $\Delta_i = I(T_i^\ast \le C_i)$, with $T_i^\ast$ the latent event time and $C_i$ the censoring time, and consider the Cox model
\ba
\lambda(t \mid \bfX_i,U_i) = \lambda_0(t) \exp\!\left\{\bfX_i^\top \bfbeta^0 + U_i\right\}.
\ea
When $U_i$ is a frailty term, the construction of $H(\bfX_i; \hat S)$ is similar to the cluster data setting, where $H(\bfX_{ij}; \hat S)^\top \bfeta$ approximates the component of the frailty that is associated with the within-group covariate distribution. When $U_i$ represents a latent censoring mechanism, $H(\bfX_i;\hat S)$ may instead be constructed from covariate features that are predictive of this mechanism. For example, let $G(t \mid \bfX_i) = \p(C_i \ge t \mid \bfX_i)$ denote the conditional censoring survival function, then one may take $H(\bfX_i; \hat S) = \left(\hat G(T_i \mid \bfX_i), \bfQ_i^\top \right)^\top$, where $\bfQ_i$ contains additional summaries of covariates associated with censoring, such as basis expansions, factor scores, or subgroup summaries. 

\section{Additional theoretical properties}
\begin{proposition} \label{prop2}
For the linear model (11), suppose
\ba
\E(U_i \mid \bfX_i) = 0, 
\quad
\bfH(\bfX_i; \hat S) \equiv \bfzero,
\quad
\E(\vareps_i \mid \bfX_i,U_i)=0,
\quad
\Var(U_i + \vareps_i \mid \bfX_i)=\sigma^2.
\ea
Let
\ba
R_0(\bfbeta) = \E\left[ \frac{1}{2} \left( Y_i - \bfX_i^\top \bfbeta \right)^2 \right],
\qquad
R_{\lambda_2}(\bfbeta) = R_0(\bfbeta) + \frac{\lambda_2}{2} \bfbeta^\top \bfK(\hat S)\bfbeta,
\ea
and define
\ba
\bfSigma_X = \E(\bfX_i \bfX_i^\top),
\qquad
\bfOmega_{\lambda_2} = \bfSigma_X + \lambda_2 \bfK(\hat S),
\ea
where $\bfSigma_X$ is positive definite and $\bfK(\hat S)$ is symmetric positive semidefinite. Then the following holds:

\begin{enumerate}
\item[(I)] The minimizers of $R_0(\bfbeta)$ and $R_{\lambda_2}(\bfbeta)$ are
\ba
\bfbeta^0 =
\arg\min_{\bfbeta \in \reals^p} R_0(\bfbeta), 
\qquad
\bfbeta^{\lambda_2} = \arg\min_{\bfbeta \in \reals^p} R_{\lambda_2}(\bfbeta) = \bfOmega_{\lambda_2}^{-1}\bfSigma_X \bfbeta^0.
\ea

\item[(II)] For any $\bfbeta \in \reals^p$,
\ba
R_0(\bfbeta) - R_0(\bfbeta^0) = \frac{1}{2} (\bfbeta - \bfbeta^0)^\top \bfSigma_X (\bfbeta - \bfbeta^0),
\ea
and
\ba
R_{\lambda_2}(\bfbeta) - R_{\lambda_2}(\bfbeta^{\lambda_2}) = \frac{1}{2} (\bfbeta - \bfbeta^{\lambda_2})^\top \bfOmega_{\lambda_2} (\bfbeta - \bfbeta^{\lambda_2}).
\ea

\item[(III)] For any $\bfb \in \reals^p$,
\ba
R_{\lambda_2}(\bfbeta^{\lambda_2} + \bfb) - R_{\lambda_2} (\bfbeta^{\lambda_2}) = \frac{1}{2} \bfb^\top \bfSigma_X \bfb + \frac{\lambda_2}{2} \bfb^\top \bfK(\hat S)\bfb.
\ea
Hence
\ba
R_{\lambda_2}(\bfbeta^{\lambda_2} + \bfb) - R_{\lambda_2} (\bfbeta^{\lambda_2}) \ge R_0(\bfbeta^0+\bfb)-R_0(\bfbeta^0),
\ea
with equality if and only if $\bfb^\top \bfK(\hat S) \bfb=0$.

\item[(IV)] If there exists a subset $\cfB \subset \reals^p$ and a constant $c_K>0$ such that
\ba
\bfb^\top \bfK(\hat S) \bfb \ge c_K \|\bfb\|_2^2 \qquad \mbox{for all } \bfb \in \cfB,
\ea
then, for all $\bfb \in \cfB$,
\ba
R_{\lambda_2}(\bfbeta^{\lambda_2} + \bfb) - R_{\lambda_2}(\bfbeta^{\lambda_2}) \ge R_0(\bfbeta^0 + \bfb) - R_0(\bfbeta^0) + \frac{\lambda_2 c_K}{2}\|\bfb\|_2^2.
\ea
\end{enumerate}
\end{proposition}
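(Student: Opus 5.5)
The plan is to exploit the fact that both risks are exact quadratics in $\bfbeta$. All four parts then reduce to completing the square, and no empirical-process input is needed.

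First I expand $R_0$. Under the linear model with $\bfH \equiv \bfzero$, write $Y_i - \bfX_i^\top\bfbeta = (U_i + \vareps_i) - \bfX_i^\top(\bfbeta - \bfbeta^0)$. Set $e_i = U_i + \vareps_i$. Then $\E(e_i \mid \bfX_i) = \E(U_i \mid \bfX_i) + \E\{\E(\vareps_i \mid \bfX_i, U_i) \mid \bfX_i\} = 0$ by the tower property. Hence the cross term $\E\{e_i \bfX_i^\top(\bfbeta-\bfbeta^0)\}$ vanishes, and
\[
R_0(\bfbeta) = \tfrac12 \E(e_i^2) + \tfrac12 (\bfbeta-\bfbeta^0)^\top \bfSigma_X (\bfbeta-\bfbeta^0),
\]
with $\E(e_i^2) = \sigma^2$ from the conditional variance assumption. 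This gives the first identity in (II). Because $\bfSigma_X$ is positive definite, it also shows that $\bfbeta^0$ is the unique minimizer of $R_0$.

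Next I handle $R_{\lambda_2}$. It equals the same constant plus $\frac12(\bfbeta-\bfbeta^0)^\top\bfSigma_X(\bfbeta-\bfbeta^0) + \frac{\lambda_2}{2}\bfbeta^\top\bfK(\hat S)\bfbeta$. This is a quadratic with Hessian $\bfOmega_{\lambda_2}$, which is positive definite because $\bfSigma_X \succ 0$ and $\bfK(\hat S) \succeq 0$. The gradient is $\bfSigma_X(\bfbeta-\bfbeta^0) + \lambda_2\bfK(\hat S)\bfbeta = \bfOmega_{\lambda_2}\bfbeta - \bfSigma_X\bfbeta^0$. Setting it to zero yields $\bfbeta^{\lambda_2} = \bfOmega_{\lambda_2}^{-1}\bfSigma_X\bfbeta^0$, which completes (I).

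For the second identity in (II), I use that a quadratic with Hessian $\bfOmega_{\lambda_2}$ and stationary point $\bfbeta^{\lambda_2}$ satisfies the displayed expansion exactly. The second-order Taylor expansion is exact for a quadratic, and the first-order term vanishes at the minimizer.

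Part (III) is then immediate. Substitute $\bfbeta = \bfbeta^{\lambda_2} + \bfb$ into (II) and split $\bfOmega_{\lambda_2} = \bfSigma_X + \lambda_2\bfK(\hat S)$. Applying (II) for $R_0$ with $\bfbeta = \bfbeta^0 + \bfb$ shows that $\frac12\bfb^\top\bfSigma_X\bfb = R_0(\bfbeta^0+\bfb) - R_0(\bfbeta^0)$. The inequality follows from $\lambda_2\,\bfb^\top\bfK(\hat S)\bfb \ge 0$ by positive semidefiniteness. Equality holds iff $\lambda_2\,\bfb^\top\bfK(\hat S)\bfb = 0$, which for $\lambda_2 > 0$ is the stated condition. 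I would note in the write-up that $\lambda_2 > 0$ is implicitly assumed here.

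Part (IV) follows by inserting the lower bound $\bfb^\top\bfK(\hat S)\bfb \ge c_K\|\bfb\|_2^2$ on $\cfB$ into the identity of (III).

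The only step requiring care is the vanishing of the cross term in $R_0$, since it is where mean-independence $\E(U_i \mid \bfX_i) = 0$ enters. It also needs finite second moments of $\bfX_i$ and $e_i$ so that the expansions are legitimate. Everything else is finite-dimensional linear algebra.
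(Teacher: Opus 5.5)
Your proof is correct and follows the paper's argument essentially step for step: the vanishing cross term (equivalently $\E(Y_i \mid \bfX_i) = \bfX_i^\top\bfbeta^0$) makes $R_0$ an exact quadratic, the first-order condition gives $\bfbeta^{\lambda_2} = \bfOmega_{\lambda_2}^{-1}\bfSigma_X\bfbeta^0$, and (II)--(IV) follow from the exact quadratic expansion together with the split $\bfOmega_{\lambda_2} = \bfSigma_X + \lambda_2\bfK(\hat S)$. Your remark that the ``if and only if'' clause in (III) requires $\lambda_2 > 0$ is a correct refinement that the paper leaves implicit, since for $\lambda_2 = 0$ equality holds for every $\bfb$.
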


Proposition \ref{prop2} makes the role of the quadratic term $\bfbeta^\top \bfK(\hat S)\bfbeta$ explicit. Relative to the unpenalized population risk, the matrix $\bfK(\hat S)$ adds curvature to the objective function through the term $\lambda_2 \bfb^\top \bfK(\hat S)\bfb$. As shown in part (III), for any perturbation direction $\bfb$, the excess penalized risk is larger than the corresponding excess unpenalized risk by exactly $\frac{\lambda_2}{2}\bfb^\top \bfK(\hat S)\bfb$. Therefore, directions that are weakly identified under $\bfSigma_X$ but strongly penalized by $\bfK(\hat S)$ become more costly to move along, which improves the conditioning of the optimization problem. Part (IV) further shows that, on directions where $\bfK(\hat S)$ is uniformly positive, this increase in curvature is at least of order $\lambda_2 |\bfb|_2^2$. Consequently, the quadratic term stabilizes estimation, reduces variability along poorly identified directions, and leads to a sharper oracle bound. In this sense, the efficiency gain comes from strengthening the local geometry of the risk function rather than from changing the structural target itself. This interpretation is closely related to the classical role of ridge regression \citep{hoerl1970ridge} and Tikhonov regularization in improving the conditioning of ill-posed problems \citep{tikhonov1977illposed}, and to the oracle-inequality literature, in which stronger restricted curvature yields sharper error bounds for regularized estimators \citep{negahban2012unified}.

\section{Proofs}
\subsection{Proof of Theorem 1}
Let
\be \label{eq:proof_Qn}
\begin{split}
Q_n(\bfbeta, \bfeta; \hat S) &= \frac{1}{n}\sum_{i=1}^n \ell(\bfZ_i; \bfbeta, \bfeta, \hat S) + \sum_{j=1}^p p_{\lambda_1}(|\beta_j|) + \frac{\lambda_2}{2}\bfbeta^\top \bfK(\hat S)\bfbeta \\ &\quad + \sum_{k=1}^q q_{\lambda_3}(|\eta_k|).    
\end{split}
\ee
Since $(\hat{\bfbeta}, \hat{\bfeta})$ minimizes $Q_n(\bfbeta,\bfeta;\hat S)$,
\be \label{eq:basicineq}
Q_n(\hat{\bfbeta},\hat{\bfeta};\hat S) \le Q_n(\bfbeta^0,\bfeta^0;\hat S).
\ee

Define
\ba
\bfDelta_\beta = \hat{\bfbeta} - \bfbeta^0, \qquad \bfDelta_\eta = \hat{\bfeta} - \bfeta^0, \qquad \bfDelta =(\bfDelta_\beta^\top, \bfDelta_\eta^\top)^\top.
\ea
We first compare the empirical loss terms in \eqref{eq:basicineq}. By Taylor expansion, at some intermediate point $\bar \bftheta$ on the line segment joining $\bftheta^0$ and $\hat{\bftheta}$,
\ba
R_n(\hat \bftheta; \hat S) - R_n(\bftheta^0; \hat S) = \nabla_{\bftheta} R_n(\bftheta^0, \hat S)^\top \bfDelta + \frac{1}{2} \bfDelta^\top \nabla_{\bftheta}^2 R_n(\bar \bftheta, \hat S)\bfDelta.
\ea
By (A4) and (A6), uniformly on the restricted cone,
\ba
\bfDelta^\top \nabla_{\bftheta}^2 R_n(\bar \bftheta; \hat S) \bfDelta \ge \frac{\kappa_0}{2} \|\bfDelta\|_2^2
\ea
with probability tending to one.

For the score term, write
\[
\nabla R_n(\bftheta^0; \hat S) = \{ \nabla R_n(\bftheta^0; \hat S) -\nabla R(\bftheta^0; \hat S)\} + \nabla R(\bftheta^0; \hat S).
\]
By Condition (A5),
\[
\left| \nabla R_n(\bftheta^0; \hat S)^\top \bfDelta \right| \le C_1 a_{n,\beta} \|\bfDelta_\beta\|_1 + C_2 a_{n,\eta} \|\bfDelta_\eta\|_1 + C_3 \bar \delta_n(\hat S) \|\bfDelta\|_2
\]
with probability tending to one.

We now control the penalty terms. From \eqref{eq:basicineq},
\ba
& \frac{1}{n} \sum_{i=1}^n \Bigl\{ \ell(\bfZ_i; \hat{\bftheta}, \hat S) - \ell(\bfZ_i; \bftheta^0, \hat S) \Bigr\} + \sum_{j=1}^p \Bigl\{ p_{\lambda_1}(|\hat\beta_j|) -p_{\lambda_1}(|\beta_j^0|) \Bigr\} \\
&\qquad + \sum_{k=1}^q \Bigl\{ q_{\lambda_3}(|\hat\eta_k|) - q_{\lambda_3}(|\eta_k^0|) \Bigr\} + \frac{\lambda_2}{2} \Bigl( \hat{\bfbeta}^\top \bfK(\hat S) \hat{\bfbeta} - (\bfbeta^0)^\top \bfK(\hat S)\bfbeta^0 \Bigr) \le 0.
\ea
Using (A7) and standard decomposability arguments for sparse penalties, we obtain the cone condition
\be \label{eq:cone}
\|\bfDelta_{\cfA^c}\|_1 \le a \|\bfDelta_{\cfA}\|_1
\ee
with probability tending to one. Consequently,
\ba
\|\bfDelta\|_1 \le (1 + a) \|\bfDelta_{\cfA}\|_1 \le (1+a) \sqrt{s_n + t_n} \|\bfDelta\|_2.
\ea

For the quadratic term,
\ba
\frac{\lambda_2}{2} \Bigl( \hat{\bfbeta}^\top \bfK(\hat S) \hat{\bfbeta} - (\bfbeta^0)^\top \bfK(\hat S) \bfbeta^0 \Bigr) = \lambda_2 (\bfbeta^0)^\top \bfK(\hat S)\bfDelta_\beta + \frac{\lambda_2}{2} \bfDelta_\beta^\top \bfK(\hat S) \bfDelta_\beta.
\ea
By (A8),
\ba
\left| \lambda_2 (\bfbeta^0)^\top \bfK(\hat S) \bfDelta_\beta \right| = o_p(\lambda_2 \|\bfDelta_\beta\|_1),
\ea
and
\[
0 \le \frac{\lambda_2}{2} \bfDelta_\beta^\top \bfK(\hat S) \bfDelta_\beta = O_p(\lambda_2) \|\bfDelta_\beta\|_2^2.
\]

Finally, under (A2), the latent-effect approximation error contributes a perturbation of order $\bar\delta_n(\hat S)$ to the score, so the basic inequality yields
\[
\begin{split}
\frac{\kappa_0}{4} \|\bfDelta\|_2^2 \le{}& C \lambda_1 \sqrt{s_n} \|\bfDelta\|_2 + C \lambda_3 \sqrt{t_n} \|\bfDelta\|_2 \\
&+ C \lambda_2 \sqrt{s_n} \|\bfDelta\|_2 + C \bar \delta_n(\hat S) \|\bfDelta\|_2.
\end{split}
\]
Therefore,
\[
\|\bfDelta\|_2 = O_p\! \left\{ \sqrt{s_n} \lambda_1 + \sqrt{t_n} \lambda_3 + \sqrt{s_n} \lambda_2 + \bar \delta_n(\hat S) \right\},
\]
and hence
\[
\|\bfDelta\|_2^2 = O_p \!\left\{ s_n \lambda_1^2 + t_n \lambda_3^2 + s_n \lambda_2^2 + \bar \delta_n^2(\hat S) \right\}.
\]
Since $\lambda_3 = O(\lambda_1)$, this proves part (I).

Note that
\ba
\|\hat{\bfbeta} - \bfbeta^0\|_1 \le \sqrt{s_n}\|\hat{\bfbeta} - \bfbeta^0\|_2 + \|\bfDelta_{\beta, \cfM^c}\|_1.
\ea
By the cone condition, $\|\bfDelta_{\beta, \cfM^c}\|_1 \le a \|\bfDelta_{\beta, \cfM}\|_1$, hence
\ba
\|\hat{\bfbeta} - \bfbeta^0\|_1 = O_p \left( s_n \lambda_1 + s_n \lambda_2 + \sqrt{s_n} \bar\delta_n(\hat S) \right),
\ea
and part (II) is proved.

Finally, by a Taylor expansion of $R(\bftheta; \hat S)$,
\[
R(\hat \bftheta; \hat S) - R(\bftheta^0; \hat S) = \nabla R(\bftheta^0; \hat S)^\top \bfDelta + \frac{1}{2} \bfDelta^\top \nabla^2 R(\bar \bftheta; \hat S) \bfDelta.
\]
Conditions (A4) and (A5) imply
\[
\left| R(\hat \bftheta; \hat S) - R(\bftheta^0; \hat S) \right| = O_p\! \left\{ \bar \delta_n(\hat S) \|\bfDelta\|_2 + \|\bfDelta\|_2^2 \right\},
\]
and Part (I), together with Young's inequality, gives Part (III).

\subsection{Proof of Proposition 1}
Since
\ba
\nabla_{\bfbeta} R_{\mathrm m}(\bfbeta; \hat S) = \E \!\left[ \bfX\, \psi(Y, \bfbeta^\top \bfX; \hat S)
\right].
\ea
evaluating $\nabla_{\bfbeta} R_{\mathrm m}(\bfbeta; \hat S)$ at $\bfbeta = c_0\bfbeta^0$ and using (A10) yields
\ba
\nabla_{\bfbeta} R_{\mathrm m} (c_0 \bfbeta^0; \hat S) = \E^\circ\! \left[ \bfX_i\, \left\{ h_{c_0}(\bfbeta^{0,\top} \bfX_i) + e_{c_0,i} \right\} \right] = \E^\circ\! \left[ \bfX_i\, h_{c_0}(T) \right] + \E^\circ(\bfX_i e_{c_0,i}).
\ea
By (A9),
\ba
\E(\bfX_i \mid T) = \frac{\bfSigma_X \bfbeta^0}{\bfbeta^{0,\top} \bfSigma_X \bfbeta^0}\,T,
\ea
and therefore
\ba
\E^\circ\! \left[ \bfX_i\, h_{c_0}(T) \right] = \E^\circ\!\left[ \E(\bfX_i \mid T)\, h_{c_0}(T) \right] = \frac{\bfSigma_X \bfbeta^0}{\bfbeta^{0,\top} \bfSigma_X \bfbeta^0} \E^\circ\! \left\{ T\, h_{c_0}(T) \right\}.
\ea
By (A11), the last expectation is zero, so
\ba
\left\| \nabla_{\bfbeta} R_{\mathrm m}(c_0 \bfbeta^0; \hat S) \right\|_2 = \left\| \E^\circ(\bfX_i r_{c_0,i}) \right\|_2 = O(\rho_n).
\ea
The local strong convexity then yields
\ba
\left\| \bfbeta^{\mathrm m} - c_0\bfbeta^0 \right\|_2 = O(\rho_n).
\ea
The support conclusion follows from the beta-min condition.

\subsection{Proof of Theorem 2}
Define the benchmark objective
\ba
Q_{0,n}(\bfbeta; \hat S) = \frac{1}{n} \sum_{i=1}^n \ell(\bfZ_i; \bfbeta, \bfzero, \hat S) + \sum_{j=1}^p p_{\tilde \lambda_1}(|\beta_j|).
\ea
Since $\tilde{\bfbeta}$ minimizes $Q_{0,n}(\bfbeta; \hat S)$,
\ba
Q_{0,n}(\tilde{\bfbeta}; \hat S) \le Q_{0,n}(\bfbeta^\star; \hat S).
\ea
Repeating the argument in the proof of Theorem 1, but without the $\bfeta$-component and the quadratic term, yields $\|\bfDelta^\star\|_1 = O_p(\tilde{s}_n \tilde{\lambda}_1)$, $\|\bfDelta^\star\|_2^2 = O_p\left(\tilde{s}_n \tilde{\lambda}_1^2 \right)$, and $R_0(\tilde{\bfbeta}; \hat S) - R_0(\bfbeta^\star; \hat S) = O_p \left( \tilde{s}_n \tilde{\lambda}_1^2 \right)$.

\subsection{Proof of Corollary 1}
By definition,
\[
\nabla_{\bfbeta}R_{\mathrm m}(\bfbeta^{\mathrm m}; \hat S)=0.
\]
Hence (8) implies
\[
\|\nabla_{\bfbeta} R_0(\bfbeta^{\mathrm m}; \hat S)\|_2\ge b_0.
\]
On the other hand,
\[
\nabla_{\bfbeta} R_0(\bfbeta^\star; \hat S)=0.
\]
By Theorem 2, for some point $\bar\bfbeta$ on the line segment joining $\bfbeta^{\mathrm m}$ and $\bfbeta^\star$,
\[
\nabla_{\bfbeta}R_0 (\bfbeta^{\mathrm m}; \hat S) = \nabla_{\bfbeta}^2 R_0(\bar\bfbeta; \hat S) (\bfbeta^{\mathrm m} - \bfbeta^\star).
\]
Since the Hessian is locally bounded in operator norm by $C_0$, we have
\[
b_0 \le \|\nabla_{\bfbeta} R_0(\bfbeta^{\mathrm m}; \hat S)\|_2 \le C_0 \|\bfbeta^\star - \bfbeta^{\mathrm m}\|_2.
\]
Therefore
\[
\|\bfDelta^\star\|_2 \ge b_0/C_0,
\]
which proves (I).

For (II), since
\[
\tilde{\bfbeta} - \bfbeta^{\mathrm m} = (\tilde{\bfbeta} - \bfbeta^\star) + (\bfbeta^\star - \bfbeta^{\mathrm m}) = (\tilde{\bfbeta} - \bfbeta^\star) + \bfDelta^\star,
\]
Theorem 2 gives
\[
\|\tilde{\bfbeta} - \bfbeta^\star\|_1 = O_p(\tilde s_n \tilde \lambda_1).
\]
Hence,
\[
\|\tilde{\bfbeta} - \bfbeta^{\mathrm m} - \bfDelta^\star\|_1 = O_p(\tilde s_n \tilde \lambda_1).
\]

\subsection{Proof of Corollary 2}
When $\bar \delta_n(\hat S) = 0$, the basic inequality in the proof of Theorem 1 becomes 
\be \label{basic2}
\frac{\kappa_K}{2} \|\bfDelta_\beta\|_2^2 \le C_1 \lambda_1 \|\bfDelta_\beta\|_1 + C_2 \lambda_2 \|\bfDelta_\beta\|_1.
\ee
Due to the standard cone condition,
\ba
\|\bfDelta_{\beta, \cfM^c}\|_1 \le a \|\bfDelta_{\beta, \cfM}\|_1,
\ea
we have
\be \label{e1}
\|\bfDelta_\beta\|_1 \le (1+a) \|\bfDelta_{\beta, \cfM}\|_1 \le (1+a) \sqrt{s_n} \| \bfDelta_\beta \|_2.
\ee
Substituting (\ref{e1}) into (\ref{basic2}) gives
\ba
\frac{\kappa_K}{2} \|\bfDelta_\beta\|_2^2 \le C \sqrt{s_n} (\lambda_1 + \lambda_2) \|\bfDelta_\beta\|_2
\ea
for some constant $C>0$. Therefore,
\ba
\|\hat{\bfbeta} - \bfbeta^0\|_2 = O_p \left( \frac{\sqrt{s_n}(\lambda_1 + \lambda_2)}{\kappa_K} \right),
\ea
and hence
\ba
\|\hat{\bfbeta} - \bfbeta^0\|_2^2
= O_p \left( \frac{s_n(\lambda_1 + \lambda_2)^2}{\kappa_K^2} \right).
\ea

For $\tilde{\bfbeta}$, the basic inequality in the proof of Lemma 1 becomes
\ba
\frac{\kappa_0'}{2} \|\bfDelta^\star\|_2^2 \le C_0 \tilde{\lambda}_1 \|\bfDelta^\star\|_1.
\ea
Again, using the cone condition,
\ba
\|\bfDelta^\star\|_1 \le (1+a) \sqrt{\tilde{s}_n} \|\bfDelta^\star\|_2,
\ea
we obtain
\ba
\frac{\kappa_0'}{2} \|\bfDelta^\star\|_2^2 \le C \sqrt{\tilde{s}_n} \tilde{\lambda}_1 \|\bfDelta^\star\|_2,
\ea
and therefore
\ba
\|\tilde{\bfbeta} - \bfbeta^\star\|_2 = O_p \left( \frac{\sqrt{\tilde{s}_n} \tilde{\lambda}_1}{\kappa_0'}
\right),
\ea
which implies
\ba
\|\tilde{\bfbeta} - \bfbeta^\star\|_2^2 = O_p \left( \frac{\tilde{s}_n \tilde{\lambda}_1^2}{\kappa_0'^2}
\right).
\ea

\subsection{Proof of Corollary 3}
Since $\bfbeta^\star$ minimizes $R_0(\bfbeta; \hat S)$,
\ba
R_0(\bfbeta^\star; \hat S) - R(\bfbeta^0, \bfeta^0; \hat S) \le R_0(\bar{\bfbeta}; \hat S) - R(\bfbeta^0, \bfeta^0; \hat S)
\ea
for every $\bar{\bfbeta} \in \reals^p$. By the assumed risk domination inequality,
\ba
R_0(\bar{\bfbeta}; \hat S) - R(\bfbeta^0, \bfeta^0; \hat S) \le C \,\E^\circ \!\left[ \left\{ \bfH(\bfX_i; \hat S)^\top \bfeta^0 - \bfX_i^\top (\bar{\bfbeta} - \bfbeta^0) \right\}^2 \right] + C\, \bar\delta^2_n(\hat S).
\ea
Taking the infimum over $\bar{\bfbeta} \in \reals^p$ yields
\ba
R_0(\bfbeta^\star; \hat S) - R(\bfbeta^0, \bfeta^0; \hat S) &\le& C \inf_{\bfbeta \in \reals^p} \E^\circ \!\left[ \left\{ \bfH(\bfX_i; \hat S)^\top \bfeta^0 - \bfX_i^\top (\bfbeta - \bfbeta^0) \right\}^2 \right] + C\,\bar\delta^2_n(\hat S) \\
&=& O\left( \bar\delta^2_n(\hat S) \right).
\ea

By Lemma 1,
\ba
R_0(\tilde{\bfbeta}; \hat S) - R(\bfbeta^0, \bfeta^0; \hat S) &=& \Bigl\{ R_0(\tilde{\bfbeta}; \hat S) - R_0(\bfbeta^\star;\hat S) \Bigr\} + \Bigl\{ R_0(\bfbeta^\star; \hat S) - R(\bfbeta^0, \bfeta^0; \hat S) \Bigr\} \\ 
&=& O_p \left( \tilde s_n \tilde \lambda_1^2 \right) + O\left( \bar\delta^2_n(\hat S) \right),
\ea
which proves the first equality. The second equality can be proved analogously. 

\subsection{Proof of Proposition 2}
Under the linear model,
\ba
R_0 (\bfbeta; \hat S) = \E^\circ \left[ \frac{1}{2} \left( Y_i-\bfX_i^\top \bfbeta \right)^2 \right].
\ea
Differentiating with respect to $\bfbeta$,
\ba
\nabla_{\bfbeta} R_0(\bfbeta; \hat S) &=& -\E^\circ\left[ \bfX_i \left( Y_i-\bfX_i^\top \bfbeta \right)\right] \\
&=& -\E^\circ \left[ \bfX_i \left( \bfX_i^\top \bfbeta^0 + U_i + \vareps_i - \bfX_i^\top \bfbeta \right) \right].
\ea
At $\bfbeta = \bfbeta^\star$, the first-order condition gives
\ba
\E(\bfX_i \bfX_i^\top) (\bfbeta^\star - \bfbeta^0) = \E^\circ(\bfX_i U_i) + \E^\circ(\bfX_i\vareps_i).
\ea
Since $\E^\circ (\vareps_i \mid \bfX_i, U_i) = 0$, we have $\E^\circ (\bfX_i\vareps_i) = \bfzero$. Therefore, if $\E (\bfX_i\bfX_i^\top)$ is invertible,
\ba
\bfbeta^\star = \bfbeta^0 + \left\{ \E (\bfX_i\bfX_i^\top) \right\}^{-1} \E^\circ (\bfX_i U_i).
\ea

\subsection{Proof of Proposition 3}
Under $\E^\circ (U_i \mid \bfX_i) = 0$ and $\E^\circ (\vareps_i \mid \bfX_i, U_i)=0$, we have $\E^\circ (Y_i \mid \bfX_i) = \bfX_i^\top \bfbeta^0$. Hence
\ba
R_0(\bfbeta) = \E\left[ \frac{1}{2} \left( Y_i - \bfX_i^\top \bfbeta \right)^2 \right] = R_0(\bfbeta^0) + \frac{1}{2} (\bfbeta - \bfbeta^0)^\top \bfSigma_X (\bfbeta - \bfbeta^0),
\ea
which proves the first identity in part (II), and implies that
$\bfbeta^0=\arg\min_{\bfbeta} R_0(\bfbeta)$.

Next, because
\ba
R_{\lambda_2}(\bfbeta) = R_0(\bfbeta) + \frac{\lambda_2}{2} \bfbeta^\top \bfK(\hat S) \bfbeta,
\ea
we have
\ba
\nabla R_{\lambda_2}(\bfbeta) = \bfSigma_X (\bfbeta - \bfbeta^0) + \lambda_2 \bfK(\hat S) \bfbeta.
\ea
Setting this equal to $\bfzero$ gives
\ba
\left\{ \bfSigma_X + \lambda_2 \bfK(\hat S) \right\} \bfbeta^{\lambda_2} = \bfSigma_X \bfbeta^0,
\ea
which implies
\ba
\bfbeta^{\lambda_2} = \bfOmega_{\lambda_2}^{-1} \bfSigma_X  \bfbeta^0.
\ea
This proves part (I).

Using the quadratic expansion around $\bfbeta^{\lambda_2}$,
\ba
R_{\lambda_2} (\bfbeta) - R_{\lambda_2} (\bfbeta^{\lambda_2}) = \frac{1}{2} (\bfbeta - \bfbeta^{\lambda_2})^\top \bfOmega_{\lambda_2} (\bfbeta - \bfbeta^{\lambda_2}),
\ea
which proves the second identity in part (II).

Now set $\bfbeta = \bfbeta^{\lambda_2} + \bfb$. Then
\ba
R_{\lambda_2} (\bfbeta^{\lambda_2} + \bfb) - R_{\lambda_2}(\bfbeta^{\lambda_2}) = \frac{1}{2} \bfb^\top \bfOmega_{\lambda_2} \bfb = \frac{1}{2} \bfb^\top \bfSigma_X \bfb + \frac{\lambda_2}{2} \bfb^\top \bfK(\hat S)\bfb.
\ea
Since
\ba
R_0(\bfbeta^0 + \bfb) - R_0(\bfbeta^0) = \frac{1}{2} \bfb^\top \bfSigma_X \bfb,
\ea
part (III) follows immediately. Part (IV) follows by substituting the lower bound $\bfb^\top \bfK(\hat S) \bfb \ge c_K \|\bfb\|_2^2$ into (III).

\subsection{Proof of Proposition 4}
Write
\ba
Y_i = \bfX_i^\top \bfbeta^0 + \bfH(\bfX_i; \hat S)^\top \bfeta^0 + r_i + \vareps_i = m_i^0 + r_i + \vareps_i.
\ea
Since $\E^\circ (r_i + \vareps_i \mid \bfX_i) = 0$, for any $\bfbeta \in \reals^p$,
\ba
\E\left\{ (Y_i - \bfX_i^\top \bfbeta)^2 \right\} &=& \E^\circ \left[ \left\{ \bfH(\bfX_i; \hat S)^\top \bfeta^0 - \bfX_i^\top (\bfbeta - \bfbeta^0) \right\}^2 \right] + \E \left\{ (r_i + \vareps_i)^2 \right\}.
\ea
Minimizing over $\bfbeta$ yields
\be \label{lm_pred1}
\E^\circ \left\{ (Y_i - m_i^\star)^2 \right\} &=& \inf_{\bfbeta \in \reals^p} \E^\circ \left[ \left\{ \bfH(\bfX_i; \hat S)^\top \bfeta^0 - \bfX_i^\top (\bfbeta - \bfbeta^0) \right\}^2 \right] \nonumber \\
&~&+ \E\left\{ (r_i + \vareps_i)^2 \right\}.
\ee
On the other hand,
\be \label{lm_pred2}
\E^\circ \left\{ (Y_i-m_i^0)^2 \right\} = \E^\circ \left\{ (r_i + \vareps_i)^2
\right\}.
\ee
Subtracting (\ref{lm_pred1}) and (\ref{lm_pred2}) proves the proposition.

\section{Simulation settings}
\subsection{Clustered data analysis}
Let $m = 400$, $n_i = 4$ for all $i = 1, \ldots, m$, and $p = 1000$ covariates, we generate $\bfX_{ij}$ from a multivariate Gaussian distribution with mean vector $\bfmu_i$ and precision matrix $\bfTheta$, where $\bfTheta$ is a block diagonal matrix with block size $5$. Within each block, we set $\Theta_{ll} = 1$ and $\Theta_{ll'} = 0.5$ for $l \ne l'$. For each replication, we randomly select a set of $p_0 = 0$, $200$, or $500$ covariates, denote the corresponding index set as $\cfP_0$, to have heterogeneous distributions across clusters. Specifically, for $l \in \cfP_0$, we independently sample $\mu_{l,i}$ from a zero mean-distribution $F_l$, and for $l \in \cfP_0^c$, we set $\mu_{l,i} = 0$ for all $i$. The latent cluster effects $U_i$ are randomly sampled from a zero-mean distribution $F_U$. The active set is $\cfM = \{ 1,6,11,12,16,17 \}$, with $\bfbeta_{\cfM} = (0.5, 0.5, 1, 1, 1.5, 1.5)^\top$.

We consider two settings of $F_U$ and $F_l$. The first setting, denoted as \textit{Independent}, generates $\mu_{l,i}$ and $U_i$ independently for all $l \in \cfP_0$, with $\mu_{l,i} \sim N(0, ~1)$ and $U_i \sim 0.5N(-1, ~0.5) + 0.5N(1, ~0.5)$. This is the setting for the effect of high-dimensional spurious correlation, where, although the cluster means are heterogeneous, they are not structurally related to the latent cluster effect. The second setting, denoted as \textit{Endogenous}, first generates a hidden variable $V_i$ from $0.5N(-1, ~0.5) + 0.5N(1, ~0.5)$, and subsequently generates $\mu_{l,i}$ and $U_i$ from
\ba
\mu_{l,i} = h_l V_i + n_i^{-1} Z_{l,i}, ~~ U_i = 0.8 V_i + 0.2 Z_{i}, 
\ea
where $Z_{l, i}$ and $Z_i$ follow $N(0, ~1)$ and $h_l \sim \text{U}(0,1)$. In this case, the design heterogeneity and the latent cluster effect are driven by the same hidden source, so $\tilde \bfbeta$ is expected to suffer genuine target distortion.

For $\hat{\bfbeta}$ and $\hat{\bfbeta}^{I, \cdot}$, we construct $H(\bfX_i; \hat S)$ by first screening each covariate with a one-way ANOVA across clusters, and then including the corresponding cluster means $\bar X_{l,i}$ for all $l$ with ANOVA $p$-value below $0.05$. 

All tuning parameters are selected by ten-fold cross-validation (CV), using either the minimizer of the CV error, denoted by $\lambda_{\min}$, or the one-standard-error rule, denoted by $\lambda_{\mathrm{1se}}$ \citep{hastie2009elements}. For computational simplicity, we set $\lambda_3 = \sqrt{\frac{\log q}{\log p}}\, \lambda_1$ for the design-assisted estimators. For the iterative estimators, we set the stopping threshold to $e_{\mathrm{thr}}=0.01$.

\subsection{Causal inference}
The baseline covariates $\bfX_{i,0}$ are generated from a multivariate Gaussian distribution with mean vector $\bfmu_{X,i}$ and precision matrix $\bfTheta$, where $\bfTheta$ is the same block-diagonal matrix used in the clustered-data setting. As presented in proximal causal DAGs \citep{tchetgen2024introduction}, we allow weak dependence between $U_i$ and $\bfX_{i,0}$ by letting $\bfmu_{X,i} = \rho_X\bfB_X U_i$, where $\rho_X = 0.1$ and $\bfB_X$ is a randomly selected sparse loading vector of size 20. We set $\cfM_\alpha = \{1, 6, 11, 16\}$, $\bfalpha^0_{\cfM_\alpha} = (0.3, 0.3, 0.6, 0.6)^\top$, $\cfM_\beta = \{2, 6, 12, 13, 16, 17\}$, and $\bfbeta^0_{\cfM_\beta} = (0.5, 0.5, 1, 1, 1.5, 1.5)^\top$,  with all remaining coordinates equal to zero, and take $\tau^0=1$ and $\gamma_A = \gamma_Y = 0.5$.

In the ``Weak proxy'' setting, the observed proxy variables contain only limited information about the latent confounder, with $\bfGamma_Z = 0.4 * \bfone$ and $\bfGamma_W = 0.2 * \bfone$. In the ``Strong proxy'' setting, the proxy variables become more informative about the hidden confounder, with $\bfGamma_Z = 0.8 * \bfone$ and $\bfGamma_W = 0.6 * \bfone$. In the ``Misspecified proxy'' setting, the proxies depend on $U_i$ nonlinearly:
\ba
\bfZ_i = \bfGamma_Z (U_i^2 - 1) + \bfB_Z^\top \bfX_{i,0} + \bfnu_{Zi},
\qquad
\bfW_i = \bfGamma_W \sin(U_i) + \bfB_W^\top \bfX_{i,0} + \bfnu_{Wi},
\ea
with $\bfGamma_Z = 0.8 * \bfone$ and $\bfGamma_W = 0.6 * \bfone$. This setting evaluates robustness when a linear nuisance representation cannot fully capture the proxy-confounder relationship.

The proxies in `Treatment-inducing proxy'' and ``Outcome-proxy'' are generated linearly as in the strong-proxy setting. In the ``Treatment-inducing proxy'' setting, treatment assignment additionally depends on $\bfZ_i$ through $\delta_Z \bar Z_i$, with $\delta_Z = 0.8$ and $\delta_W = 0$. Thus, $\bfZ_i$ directly affects treatment assignment but has no direct effect on the outcome. This corresponds to the proximal DAG with the $\bfZ \to A$ edge present and the $\bfW \to Y$ edge omitted. In the ``Outcome-proxy'' setting, we set $\delta_Z = \delta_W = 0.8$, so that $\bfZ_i$ directly affects treatment assignment and $\bfW_i$ directly affects the outcome. 

For the design-assisted estimators, we consider two constructions of the nuisance feature map. The first is a raw-proxy construction, used for the same three design-assisted estimators as in Section 5.1. Let $\widehat{\bfR}_i$ denote the leading sparse principal-component scores computed from the centered and scaled baseline covariates $\bfX_{i,0}$. We set
\ba
H_{\mathrm{raw}}(\bfX_i;\hat S) = \left( \bfZ_i^\top, \, \bfW_i^\top, \, \widehat{\bfR}_i^\top \right)^\top.
\ea
This construction treats the proxy variables and low-dimensional design summaries as a flexible dictionary for approximating the latent confounding component.

The second construction is motivated by proximal causal learning \citep{tchetgen2024introduction}. For each coordinate
$k = 1, \ldots, q_W$, we estimate
\ba
g_k(A_i, \bfX_{i,0}, \bfZ_i) = \E^\circ(W_{ik} \mid A_i, \bfX_{i,0}, \bfZ_i)
\ea
using a first-stage high-dimensional LASSO regression of $W_{ik}$ on $(A_i, \bfX_{i,0}^\top, \bfZ_i^\top)^\top$, leaving $A_i$ unpenalized. The fitted values are denoted by
\ba
\widehat W_{c,ik} = \widehat g_k(A_i, \bfX_{i,0}, \bfZ_i),
\qquad
\widehat{\bfW}_{c,i} = (\widehat W_{c,i1}, \ldots, \widehat W_{c,iq_W})^\top .
\ea
We then define the proximal-control design-assisted feature map as
\ba
H_{\mathrm{pc}}(\bfX_i; \hat S) = \left( \widehat{\bfW}_{c,i}^\top, \, \widehat{\bfR}_i^\top \right)^\top.
\ea
This mirrors the proximal-control variable $\bfW_c = \E(\bfW \mid A, \bfX, \bfZ)$ used in proximal two-stage least squares, but here $\widehat{\bfW}_{c,i}$ is incorporated as a regularized nuisance feature rather than through an unpenalized instrumental-variable estimator. 

\clearpage

\section{Results for real data analysis}
\begin{table}[h!] 
\centering
\caption{Biological categories of the 37 genes selected 
by $\hat \bfbeta$.}
\label{table_select_genes}
\scriptsize
\begin{tabular}{llp{5.5cm}}
\toprule
\textbf{Category} & \textbf{Selected Genes} 
& \textbf{Biological Relevance} \\
\midrule
G-MDSC markers (3) 
& \textit{S100A12}, \textit{CD177}, \textit{MCEMP1} 
& Core NMF5 markers; top features in original 
  LASSO model \\
Longitudinal signal (1) 
& \textit{SERPINB2}
& Diverging trajectory between severe and 
  non-severe patients \\
Neutrophil chemotaxis (1) 
& \textit{CXCL2} 
& Neutrophil migration pathway enriched 
  in severe disease \\
MHC class II (2) 
& \textit{HLA-DMA}, \textit{HLA-DMB} 
& Enriched in non-severe patients \\
Interferon response (4) 
& \textit{RARRES3}, \textit{IRF2BP2}, 
  \textit{UPP1}, \textit{CTSW} 
& Interferon pathways enriched in infection 
  and associated with severity \\
Metabolism / ROS (3) 
& \textit{NQO2}, \textit{GPAT4}, \textit{ALDH1A1} 
& Metabolic pathways associated with 
  severe disease \\
Myeloid differentiation (2) 
& \textit{HOTAIRM1}, \textit{PDE4D} 
& Known roles in neutrophil development 
  and activation \\
Humoral immunity (2) 
& \textit{IGHM}, \textit{IGLVI-70}$^*$
& Humoral immune responses associated 
  with severity \\
Other protein-coding (11) 
& \textit{CCR3}, \textit{TSPAN2}$^*$, 
  \textit{TDRD9}, 
& Immune signaling, regulation, \\
& \textit{USP11}, \textit{METTL7B}, \textit{COA4}, 
& and cell metabolism \\
& \textit{YBEY}, \textit{MLH3}, 
  \textit{RPGRIP1}, & \\
& \textit{NOV}, \textit{AC008763.3}$^*$ & \\
Non-coding / 
& \textit{LINC00877}, \textit{RPL21P44},  
& Functional interpretation limited \\
unannotated (8)
& \textit{AC105749.1}, \textit{AC037198.2}, & \\
& \textit{AC108134.3}, \textit{AC023906.5}, & \\
& \textit{AC009303.4}, 
  \textit{ENSG00000288064} & \\
\bottomrule
\multicolumn{3}{l}{\footnotesize Ensembl IDs are retained for genes without established gene symbols.} \\
\end{tabular}
\end{table}

\bibliographystyle{apalike}
\bibliography{ref}

\end{document}